\definecolor{weborange}{rgb}{.8,.3,.3}
\definecolor{webblue}{rgb}{0,0,.8}
\definecolor{internallinkcolor}{rgb}{0,.5,0}
\definecolor{externallinkcolor}{rgb}{0,0,.5}
\definecolor{DarkBlue}{rgb}{0,0,0.8}  \definecolor{DarkOrange}{rgb}{0.8,0.4,0}  \def\mylinkcolor{DarkBlue}
\newtheorem{theorem}{Theorem}[section]
\newtheorem{proposition}[theorem]{Proposition}
\newtheorem{lemma}[theorem]{Lemma}
\newtheorem{claim}[theorem]{Claim}
\newtheorem{corollary}[theorem]{Corollary}
\newtheorem{definition}[theorem]{Definition}
\theoremstyle{definition}
\DeclareMathOperator*{\Ex}{\mathbb{E}}
\newcommand{\ket}[1]{{|#1\rangle}}
\newcommand{\bra}[1]{{\langle#1|}}
\newcommand{\bigket}[1]{{\left |#1 \right \rangle}}
\newcommand{\bigbra}[1]{{\left \langle#1 \right|}}
\newcommand{\ketbra}[2]{|#1\rangle\! \langle #2|}
\newcommand{\Tr}{\mbox{\rm Tr}}
\newcommand{\Id}{\mathbb{I}}
\newcommand{\C}{\ensuremath{\mathbb{C}}}
\newcommand{\N}{\ensuremath{\mathbb{N}}}
\newcommand{\E}{\mathcal{E}}
\newcommand{\mH}{\mathcal{H}}
\newcommand{\X}{\mathcal{X}}
\newcommand{\Y}{\mathcal{Y}}
\newcommand{\A}{\mathcal{A}}
\newcommand{\B}{\mathcal{B}}
\newcommand{\D}{\mathrm{D}}
\newcommand{\I}{\mathrm{I}}
\newcommand{\F}{\mathrm{F}}
\newcommand{\puretomixed}[1]{\llbracket #1 \rrbracket}
\newcommand{\eps}{\varepsilon}
\newcommand{\poly}{\mathrm{poly}}
\newcommand{\wt}[1]{\widetilde{#1}}
\newcommand{\what}[1]{\hat{#1}}
\newcommand{\eval}{\omega^*}
\newcommand{\val}{\omega}
\newcommand{\strategy}{\mathscr{S}}
\renewcommand{\P}{\mathsf{P}}
\newcommand{\ba}{\mathbf{a}}
\newcommand{\bb}{\mathbf{b}}
\newcommand{\bx}{\mathbf{x}}
\newcommand{\by}{\mathbf{y}}
\newcommand{\br}{\mathbf{r}}
\newcommand{\bX}{\mathbf{X}}
\newcommand{\bY}{\mathbf{Y}}
\newcommand{\bR}{\mathbf{R}}
\newcommand{\bA}{\mathbf{A}}
\newcommand{\bB}{\mathbf{B}}
\newcommand{\bOmega}{{\boldsymbol{\Omega}}}
\newcommand{\bomega}{{\boldsymbol{\omega}}}
\newcommand{\bD}{\mathbf{D}}
\newcommand{\bM}{\mathbf{M}}
\newcommand{\bU}{\mathbf{U}}
\newcommand{\bV}{\mathbf{V}}
\newcommand{\mi}{{-i}}
\newcommand{\dummy}{{\mathsmaller{\perp}}}
\newcommand\dummyx{{\dummy\!/x}}
\newcommand{\ac}{\mathrm{Z}}
\newcommand{\xp}{{x,\dummy}}
\newcommand{\py}{{\dummy, y}}
\newcommand{\xy}{{x,y}}
\newcommand{\pxy}{{\dummy\!/x,y}}
\newcommand{\pxp}{{\dummy\!/x,\dummy}}
\newcommand{\pp}{{\dummy, \dummy}}
\newcommand{\ssxp}{{\br_\mi , x, \dummy}}
\newcommand{\sspy}{{\br_\mi , \dummy,y}}
\newcommand{\sspp}{{\br_\mi , \dummy,\dummy}}
\newcommand{\ssxy}{{\br_\mi , x,y}}
\newcommand{\sspxy}{{\br_\mi , \dummyx,y}}
\newcommand{\sspxp}{{\br_\mi , \dummyx,\dummy}}
\newcommand{\Qsf}{\mathsf{Q}}
\newcommand{\Ssf}{\mathsf{S}}
\newcommand{\Rsf}{\mathsf{R}}
\begin{document}

\title{Anchored parallel repetition for nonlocal games}

\author{Mohammad Bavarian\thanks{Work completed as a graduate student at MIT. Email: \texttt{mobavarian@gmail.com}.} \and
Thomas Vidick~\thanks{Department of Computing and Mathematical Sciences, California Institute of Technology, USA. Email: \texttt{vidick@caltech.edu}.} \and
Henry Yuen~\thanks{Department of Computer Science, Columbia University, USA. Email: \texttt{henry.yuen@columbia.edu}.}}

\date{}
\maketitle

\begin{abstract}
We introduce a simple transformation on two-player nonlocal games, called ``anchoring'', and prove an exponential-decay parallel repetition theorem for all anchored games in the setting of quantum entangled players. This transformation is inspired in part by the Feige-Kilian transformation (SICOMP 2000), and has the property that if the quantum value of the original game $G$ is $v$ then the quantum value of the anchored game $G_\dummy$ is $1 - (1 - \alpha)^2 \cdot (1 - v)$ where $\alpha$ is a parameter of the transformation. In particular the anchored game has quantum value $1$ if and only if the original game $G$ has quantum value $1$. This provides the first gap amplification technique for general two-player nonlocal games that achieves exponential decay of the quantum value. 

\end{abstract}

\section{Introduction}\label{sec:anchorpr_intro}

A \emph{nonlocal game} is specified by finite question sets $\X, \Y$, finite answer sets $\A, \B$, a probability distribution $\mu$ over $\X\times\Y$, and a verification predicate $V : \X \times \Y \times \A \times \B\rightarrow \{0,1\}$ that determines valid question and answer tuples. The game is played as follows. A referee samples questions $(x,y) \in \X \times \Y$ according to $\mu$ and sends $x$ to the first player and $y$ to the second. Each player replies with an answer, $a\in \A$ and $b \in \B$ respectively. The referee accepts if and only if $V(x,y,a,b) = 1$, in which case we say that the players win the game. Nonlocal games have been studied in a variety of settings, ranging from hardness of approximation~\cite{khot2002power} (in this literature they are called \emph{two-prover one-round games}), interactive proof systems~\cite{ben1988,fortnow1988}, and the study of Bell inequalities and non-locality in quantum physics~\cite{bell1964,chsh_paper,cleve2004consequences}. Recently, nonlocal games has become central objects of study in quantum complexity theory and quantum information, especially given their role in the study of quantum interactive proofs~\cite{reichardt2013classical,ji2020mip}, quantum cryptography~\cite{vazirani2012certifiable,vazirani2014fully}, and fundamental questions about quantum entanglement~\cite{coladangelo2018unconditional,slofstra2019set}.

The main quantity associated with a nonlocal game $G$ (which we'll often simply refer to as a \emph{game}) is its \emph{value}, which is the maximum acceptance probability achievable by the players, where the probability is taken over the questions as chosen by the referee and the players' answers. Different notions of value arise from different restrictions on allowed strategies for the players.  The most relevant for us are the \emph{classical value} denoted by $\val(G)$ and the \emph{quantum value} denoted by $\eval(G)$. The former is obtained by restricting the players to classical strategies, where each player's answer is a function of its question only (both private and shared randomness are in principle allowed, but easily seen not to help). The latter allows for quantum strategies, in which each player's answer is obtained as the outcome of a local measurement performed on a (finite-dimemsional) quantum state shared by the players. While the use of quantum states {does not} allow communication between the players, as first conclusively demonstrated by Bell~\cite{bell1964} it does allow for correlations between their questions and answers that cannot be reproduced by any classical strategy.

We study the behavior of the quantum value of games under parallel repetition. 
In the $n$-fold parallel repetition $G^n$ of a game $G$ the referee samples $(x_1,y_1),\ldots,(x_n,y_n)$ independently from $\mu$ and sends $(x_1,\ldots,x_n)$ to the first player and $(y_1,\ldots,y_n)$ to the second. The players respond with answer tuples $(a_1,\ldots,a_n)$ and $(b_1,\ldots,b_n)$ respectively. The players win if and only if their answers satisfy $V(x_i,y_i,a_i,b_i) = 1$ for all $i$. 

Clearly, if the players play each instance of $G$ in $G^n$ independently of each other then their success probability is the $n$-th power of their success probability in $G$. Informally, a parallel repetition theorem for a class of games and a class of strategies shows that no strategy can avoid this exponential scaling of the success probability with the number of repetitions. The main obstacle to proving a parallel repetition theorem is that players need not play each instance independently. For example, their answers for the $i$-th instance of $G$ may depend on their questions in the $j$-th instance for $j \neq i$. Indeed, for classical (as well as for quantum) strategies it is known that there are games $G$ for which non-product strategies enable the players to win $G^n$ with probability greater than $\val(G)^n$~\cite{feige2002error, raz2011counterexample,chung2015parallel}.

For the case of classical strategies the parallel repetition theorem of Raz~\cite{raz1998parallel} establishes that if $G$ is a game such that $\val(G)<1$ the value $\val(G^n)$ decays exponentially with $n$. The two following decades have seen a substantial amount of research on this question, connecting the problem of parallel repetition to topics such as the Unique Games conjecture, hardness of approximation, communication complexity, and more~\cite{barak2008rounding, rao2011parallel,braverman2013direct}. The most important application of parallel repetition is its use as a generic and efficient method for performing \emph{gap amplification}, or \emph{hardness amplification}. Suppose a certain problem -- deciding membership in a language $L$, or breaking a given cryptosystem -- has been reduced to the task of distinguishing between $\val(H) = 1$ and $\val(H) < \delta$ for a certain $H$. Parallel repetition can be employed to argue that, by taking $H = G^n$ for some game $G$, this task is as hard as that of distinguishing between $\val(G) = 1$ and $\val(G) < 1-\eps$ when $n \geq \poly(\eps^{-1},\log \delta^{-1})$.

In recent years there has been much interest in obtaining hardness amplification results for games with quantum entangled players --- in particular, obtaining an analogue of Raz's theorem for the quantum value of nonlocal games. However, it has been a challenge to extend the techniques used to prove classical parallel repetition theorems to the setting of quantum strategies; one significant difficulty is that there is no \emph{a priori} upper bound on the amount of entanglement needed to play a given game optimally. Thus an important motivation for studying this question, aside from its application to hardness amplification, is to further develop mathematical tools for analyzing quantum entanglement in interactive protocols. Despite much research---and partial results, as surveyed in Section~\ref{sec:related}---it remains an open question as to whether an analogue of Raz's theorem holds for entangled games. In this paper, we make progress on this question.

\paragraph{Main result.}
Our main result is that there exists a polynomial-time transformation, called \emph{anchoring}, that takes as input a parameter $0 < \alpha < 1$ and the description of a nonlocal game $G$ and returns the description of a game $G_\dummy$ such that the quantum value of $G_\dummy$ decays exponentially under parallel repetition. We first define the anchoring transformation.

\begin{definition}[The anchoring transformation] \label{def:anchor_basic} 
Let $G$ be a nonlocal game with question distribution $\mu$ on $\X\times \Y$ and verification predicate $V$ and let $0< \alpha\leq 1$. In the \emph{$\alpha$-anchored game $G_{\dummy}$} the referee first chooses a question pair $(x,y)\in \X\times \Y$ according to $\mu$. Independently and with probability $\alpha$ the referee replaces each of $x$ and $y$ with an auxiliary ``anchor'' symbol $\dummy$ to obtain the pair $(x', y')\in (\X\cup \{\dummy \}) \times (\Y\cup \{\dummy \})$ which is sent to the players as their respective questions. If any of $x',y'$ is $\dummy$ the referee accepts regardless of the players' answers; otherwise, the referee checks the players' answers according to the predicate $V$.
\end{definition}

The following simple lemma relates the quantum values of $G$ and $G_\dummy$.

\begin{lemma}
Let $0< \alpha < 1$ and let $G$ be a nonlocal game. Then 
\begin{gather*}
	\omega^*(G_\dummy) \,=\, 1 - (1 - \alpha)^2 \cdot (1 - \omega^*(G))\;.
\end{gather*}
In particular, the quantum value of $G_\dummy$ is $1$ if and only if the quantum value of $G$ is $1$. (Moreover, the same equation holds for the classical value.)
\end{lemma}
\begin{proof}
Let $\X,\Y$ denote the question alphabets for the game $G$. 
	We first show that $\omega^*(G_\dummy) \geq 1 - (1 - \alpha)^2 \cdot (1 - \omega^*(G))$. Let $\strategy$ be a quantum strategy for $G$ that succeeds with probability $p$. Define a strategy $\strategy_\dummy$ for $G_\dummy$ as follows: upon receipt of a question $x\neq\dummy$ the player generates an answer according to $\strategy$. Otherwise, if its question is $\dummy$ then the player outputs a fixed answer from $\cal{A} \cup \cal{B}$. The success probability of this strategy in $G_\dummy$ is
	\[
		(1 - (1-\alpha)^2) + (1 - \alpha)^2 \cdot p \,=\, 1 - (1 - \alpha)^2 \cdot (1 - p)\;,
	\]
	where $(1 - \alpha)^2$ is the probability that neither player receives the question $\dummy$. Taking the supremum over strategies $\strategy$ yields the desired lower bound on $\eval(G_\dummy)$.
	
	On the other hand, it also is the case that $\omega^*(G_\dummy) \leq 1 - (1 - \alpha)^2 \cdot (1 - \omega^*(G))$. Let $\strategy_\dummy$ be a strategy for $G_\dummy$ that succeeds with probability $q$. Note that by ignoring the measurements that correspond to the question $\dummy$, $\strategy_\dummy$ naturally implies a strategy $\strategy$ for $G$. Since in $G_\dummy$ the players automatically win whenever either of them receive the question $\dummy$ their success probability can be expressed as $q = (1 - (1-\alpha)^2) + (1 - \alpha)^2 \cdot p$ where $p$ denotes the probability that the players succeed in $G$ using the strategy $\strategy$. Taking the supremum over strategies $\strategy_\dummy$ yields the desired lower bound on $\eval(G)$.
	
The same proof yields the same relation between the classical values $\val(G_\dummy)$ and $\val(G)$.
\end{proof}

To give an example: for $\alpha = 1 - \frac{\sqrt{3}}{2}$ it holds that $\eval(G_\dummy) =\frac{3}{4} \eval(G) + \frac{1}{4}$. One can think of $G_\dummy$ as playing the original game $G$ with probability $3/4$, and a trivial game with probability $1/4$. The term ``anchored'' refers to the fact that question pairs chosen according to $\mu$ are all  ``anchored'' by a common question $(\dummy,\dummy)$. Though the presence of the anchor question makes the game $G_\dummy$ {easier} to play than the game $G$ (in the sense that $\eval(G_\dummy) \geq \eval(G)$)  it facilitates showing that the quantum value of the repeated game $G_\dummy^n$ decays exponentially with $n$, as established by the following.

\begin{theorem}\label{thm:main-informal}
Let $0 < \alpha < 1$.
Let $G_\dummy$ be an $\alpha$-anchored game with answer alphabets $\A,\B$ satisfying $\eval(G_\dummy) < 1-\eps$. Then for all $n\geq 1$,
	$$ \eval(G^n_\dummy)\leq \frac{4}{\eps} \, \exp \Big ( - \frac{c \cdot \alpha^{48} \cdot \eps^{17} \cdot n}{s} \Big)\;,$$
	where $s = \max \{ \log |\A \times \B|, 1 \}$ and $c > 0$ is a universal constant.
\end{theorem}

We prove \Cref{thm:main-informal} as a corollary to a more general result that relates the \emph{minimum dimension} of any quantum strategy for $G_\dummy^n$ that succeeds with a sufficiently large probability in $G_\dummy^n$ to the minimum dimension required to succeed in $G_\dummy$ with probability at least $1 - \eps$. This more general result is presented as \Cref{thm:anchorpr_quantum}.

The idea of modifying  a game to facilitate its analysis under parallel repetition originates in the work of Feige and Kilian~\cite{feige2000two}, which predates Raz's parallel repetition theorem. 
Feige and Kilian introduce a transformation that converts an arbitrary game $G$ to a so-called \emph{miss-match} game $G^{FK}$. The transformation is {value-preserving} in the sense that there is a precise affine relationship $\val(G^{FK}) = (2 + \val(G))/3$. Furthermore Feige and Kilian show that the value of the $n$-fold repetition of $G^{FK}$ decays {polynomially} in $n$ whenever $\val(G)<1$. This enables them to establish a general gap amplification result without having to prove a parallel repetition theorem for arbitrary games. This is sufficient for many applications, including to hardness of approximation, for which it is enough that the gap amplification procedure be efficient and value-preserving. 

We adopt a similar approach to that of Feige and Kilian by providing an arguably even simpler transformation which \emph{preserves both the classical and quantum value} of a game and for which we are able to prove an exponential decay under parallel repetition. In contrast the transformation considered by Feige and Kilian does not in general preserve the quantum value: there are examples of $G$ such that $\eval(G)=1$ but $\eval(G^{FK})<1$.\footnote{Such an example can be constructed from the Magic Square game $G_{MS}$~\cite{mermin1990simple,aravind2002simple}, which satisfies $\eval(G_{MS})=1$, and using the techniques in~\cite{ito2009oracularization} to show that $\eval(G_{MS}^{FK})<1$; we omit the details.}

 \subsection{Related work}\label{sec:related}

We refer to the surveys by Feige and Raz~\cite{feige1995,raz2010parallel} for an extensive historical account of the classical parallel repetition theorem and its connections to the hardness of approximation and multiprover interactive proof systems, and instead focus on more recent results, specifically those pertaining to quantum parallel repetition.

The first result on the parallel repetition of entangled-player games was obtained by Cleve, Slofstra, Unger, and Uphadyay~\cite{cleve08} for the class of XOR games. This was extended to the case of unique games by Kempe, Regev and Toner \cite{kempe2008}. Kempe and Vidick~\cite{kempe2011parallel} studied a Feige-Kilian type repetition for the quantum value of nonlocal games and obtained a polynomial rate of decay for this type of repetition. The Feige-Kilian transformation does not in general preserve the quantum value, and their result does not provide a hardness amplification technique for arbitrary nonlocal games.

Dinur, Steurer and Vidick~\cite{DinurSV14} extended the analytical framework of Dinur and Steurer~\cite{dinur2014analytical} to obtain an exponential-decay parallel repetition theorem for the quantum value of the class of projection games. Chailloux and Scarpa~\cite{chailloux2014parallel} and  Jain, Pereszl\'{e}nyi, and Yao~\cite{jain2014parallel} prove exponential-decay parallel repetition for \emph{free games}, i.e. games where the questions are sampled independently. Their analysis, as well as the follow-up work Chung, Wu, and Yuen~\cite{chung2015parallel}, is based on extending the information-theoretic techniques used by Raz~\cite{raz1998parallel} and Holenstein~\cite{Hol09} to study the parallel repetition of the classical value of games.

\paragraph{Subsequent work.} Since the original posting of this work several related papers have appeared. In a separate paper~\cite{bavarian2017parallel} we analyze a different hardness amplification method called ``fortification'',  first introduced by Moshkovitz~\cite{moshkovitz2014parallel} in the context of classical parallel repetition and obtain exponential-decay parallel repetition bounds for the quantum value of fortified games. Later the second author~\cite{yuen2016parallel} showed that for \emph{all} nonlocal games $G$, the quantum value of $G^n$ must decay to $0$ at a polynomially-fast rate (provided that $G$ has quantum value less than one). Finally, in~\cite{jain2020direct} Jain and Kundu give an alternate proof of \Cref{thm:main-informal} and use the anchoring transformation to study {direct product theorems} for one-way quantum communication complexity.

\paragraph{Comparison with earlier versions of this paper.} Earlier versions of this paper~\cite{bavarian2015anchoring,bavarian2017hardness} include a parallel repetition result for games with more than two players. (For the case of more than two players it is not known if even the classical value decreases exponentially under standard repetition.) These extensions are omitted from the current version because they are subsumed by a subsequent paper of Dinur, Harsha, Venkat and Yuen~\cite{dinur2017multiplayer}; this allows us to focus on the main contribution of the paper, the analysis of anchoring for two-player games with quantum players.

\paragraph{Acknowledgments.} T.V. is supported by NSF CAREER Grant CCF-1553477, AFOSR YIP award number FA9550-16-1-0495, MURI Grant FA9550-18-1-0161 and the IQIM, an NSF Physics Frontiers Center (NSF Grant PHY-1125565) with support of the Gordon and Betty Moore Foundation (GBMF-12500028). H.Y. is supported by an NSERC Discovery Grant, a Google Research Award, and AFOSR award FA9550-21-1-0040.

\section{Technical overview}
\label{sec:anchorpr_technical}

Essentially all known proofs of parallel repetition proceed by reduction. One shows how a strategy with a sufficiently large value in the repeated game $G^n$ can be ``rounded'' into a strategy $\strategy$ for $G$ with value strictly greater than $\val(G)$ (or $\eval(G)$ in the quantum case), yielding a contradiction. 

Let $\strategy^n$ be a strategy for $G^n$ such that $\eval(G^n) \gg \eval(G)^n$, and let us aim to define a good strategy $\strategy$ for $G$. (The notation $\strategy^n$ does \emph{not} denote that $\strategy^n$ is a product strategy.) By a straightforward inductive argument one can show that there must exist a set $C \subset [n]$ and an index $i \in [n] \setminus C$ such that $\P(W_i | W_C) > \eval(G) + \delta$, where $W_i$ is the event that the players win the $i$-th instance of $G$ in $G^n$ and $W_C$ is the event that the players win all instances indexed by $C$. Given a pair of questions $(x,y)$ in $G$ the strategy $\strategy$ embeds them in the $i$-th coordinate of a $n$-tuple of questions $(\bx,\by)$ with
\begin{gather*}
\bx = (\bx_1, \bx_2, \ldots, \bx_{i-1}, x, \bx_{i+1}, \ldots, \bx_n) \;,\\
\by = (\by_1, \by_2, \ldots, \by_{i-1}, y, \by_{i+1}, \ldots, \by_n)\;,
\end{gather*}
that is approximately distributed according to the distribution of the players' question tuples in the game $G^n$ when they use the strategy $\strategy^n$, conditioned on the $i$-th question pair being $(x,y)$ and conditioned on the event $W_C$. We denote such conditional distributions as $\P_{\bX \bY | \bX_i = x, \bY_i = y,W_C}$, where bold variables are used to denote tuples. The players then execute $\strategy^n$ on  $\bx$ and $\by$ respectively to obtain answers $\ba = (\ba_1,\ldots,\ba_n)$ and $\bb = (\bb_1,\ldots,\bb_n)$. Finally, they return $(\ba_i,\bb_i)$ as their answers in $G$. 

The answers $(\ba_i,\bb_i)$ conditioned on question pair $(x,y)$ generated using the strategy $\strategy$ is $\delta'$-close to the probability distribution $\P_{\bA_i \bB_i | \bX_i = x,\bY_i = y,W_C}$, for some $\delta' \ll \delta$. It can be shown that for most indices $i \in [n] \setminus C$, the conditional distribution $\P_{\bX_i \bY_i | W_C}$ is also $\delta'$-close to the question distribution $\mu$ of $G$; assume that such an index $i$ was selected. Then the distribution of question and answers $(x,y,\ba_i,\bb_i)$ using this strategy is going to be $2\delta'$-close to the distribution $\P_{\bX_i \bY_i \bA_i \bB_i | W_C}$, which is the distribution of questions and answers in the $i$-th coordinate conditioned on the event $W_C$. Thus the probability that $V(x,y,\ba_i,\bb_i) = 1$ is going to be $2\delta'$-close to $\P(W_i | W_C)$, which provided $\delta'$ is small enough compared to $\delta$ is strictly greater than $\eval(G)$. This yields the desired contradiction.

Since $\strategy^n$ is not necessarily a product strategy, conditioning on $W_C$ may introduce correlations that make $\P_{\bX \bY | \bX_i = x, \bY_i = y,W_C}$ impossible to sample from exactly, given that the players are not allowed to communicate and this distribution in general depends on both $x$ and $y$. A key insight in Raz' proof of parallel repetition is that it is still possible for the players to \emph{approximately} sample from the distribution using only local operations and shared randomness. 
Drawing on the work of Razborov~\cite{Razborov92} on analyzing the randomized communication complexity of the set disjointness problem, Raz~\cite{raz1998parallel} introduces a \emph{dependency-breaking random variable} $\bOmega$ with the following properties:
\begin{itemize}
\item[(a)] Given a sample $\bomega$ of the random variable $\bOmega$ both players can locally sample $\bx$ and $\by$ respectively such that the marginal distribution of $(\bx,\by)$ is approximately $\P_{\bX \bY | \bX_i = x, \bY_i = y,W_C}$,
\item[(b)] The players can jointly sample the same $\bomega \sim \bOmega$ using shared randomness.
\end{itemize}
In~\cite{Hol09} $\bOmega$ is defined so that a sample $\bomega$ fixes at least one of $\{x_{i'}, y_{i'}\}$ for each $i' \neq i$. It can then be shown that conditioned on $x$, $\bOmega$ is nearly (though not exactly) independent of $y$, and vice-versa. In other words, 
\begin{equation}
\label{eq:intro_cor_samp}
	\P_{\bOmega | \bX_i = x, W_C} \approx \P_{\bOmega | \bX_i = x, \bY_i = y, W_C} \approx \P_{\bOmega | \bY_i = y, W_C}
\end{equation}
where ``$\approx$'' denotes closeness in statistical distance. Eq.~\eqref{eq:intro_cor_samp}  suffices to guarantee that the players can \emph{approximately} sample the same $\bomega$ from $\P_{\bOmega | \bX_i = x, \bY_i = y, W_C}$ with high probability, achieving point (b) above. This sampling is accomplished through a technique called \emph{correlated sampling}. 

The proof of point (a) above in~\cite{raz1998parallel,Hol09} heavily relies on the assumption that the players employ a classical strategy (in fact, the analysis is done for the case of deterministic strategies). Quantum strategies have additional correlations due to the presence of entanglement between the players. The challenge is find an appropriate variant of (a) that applies in this case. Note that due to the possibility that the game $G$ has much better quantum strategies than the best classical, the rounding argument must necessarily result in a genuinely quantum strategy, so that there is no hope of a reduction to the classical case in general.

To formulate the issue more concretely, note that in order to execute the repeated strategy for $G^n$ conditioned on some questions $\bX_i = x$, $\bY_i = y$ received in $G$ (as well as the event $W_C$) it no longer suffices for the players to sample the vector of questions $(\bx,\by)$ from the distribution $\P_{\bX \bY | \bX_i = x, \bY_i = y,W_C}$. In the quantum case the probability space also involves the results of measurements performed by the players in $\strategy^n$ on a shared entangled state $\ket{\psi}$. Thus conditioning on the event $W_C$ also entails ``conditioning'' $\ket{\psi}$ on $W_C$ -- it is not clear \emph{a priori} what this means.

We model this as the requirement that the players in $G$ have access to \emph{some}  entangled state $\ket{\Phi_{x,y}}$ that ``simulates'' the appropriate conditional probability space. We call the state $\ket{\Phi_{x,y}}$ a \emph{dependency-breaking state}. It needs to satisfy two properties:
\begin{enumerate}
	\item \emph{Usefulness}: Given the shared state $\ket{\Phi_{x,y}}$ each player can perform a measurement depending on their question, $x$ or $y$, on their respective share of the state, such that the joint distribution of their measurement outcomes is close to $\P_{\bA_i \bB_i | \bX_i = x, \bY_i = y,W_C}$.
	\item \emph{Sampleability}: There exists a shared entangled state $\ket{\Phi}$, that is independent of $x$ and $y$, and unitary maps $U_x$ and $V_y$ such that $U_x \otimes V_y \ket{\Phi}$ is close to $\ket{\Phi_{x,y}}$, on average over $x,y$ sampled from the game distribution in $G$.
\end{enumerate}
The second property implies that, given question pair $(x,y)$, the players can start with the shared state $\ket{\Phi}$ and apply local operations to generate an approximation of $\ket{\Phi_{x,y}}$, in analogy with the classical correlated sampling procedure. Once the players have the approximation of $\ket{\Phi_{x,y}}$ the first property implies that they are able to generate   answers according to the right conditional distribution. 

It was shown by~\cite{jain2014parallel,chailloux2014parallel,chung2015parallel} that for the case of {free games} case (i.e.\ games with a product question distributions) it is possible to construct dependency-breaking states $\{ \ket{\Phi_{x,y}} \}_{x,y}$ and unitaries $U_x,V_y$ that satisfy both properties. However, establishing the existence of these states and unitaries for general  games is much more challenging, and indeed it remains an open problem to extend this approach to prove a quantum analogue of Raz's parallel repetition theorem (although we note that using this approach it is possible to prove a version with a polynomial decay bound~\cite{yuen2016parallel}).

\paragraph{Breaking correlations in repeated anchored games.} Our main contribution is to extend the framework of dependency-breaking variables and states to the setting of anchored games. We introduce dependency-breaking variables $\Omega$ and states $\ket{\Phi_{x,y}}$, and show that together they satisfy both Usefulness and Sampleability.

The analysis for anchored games is more intricate than for free games. Proofs of the analogous statements for free games in~\cite{jain2013parallel,  chailloux2014parallel, chung2015parallel} make crucial use of the fact that all possible question tuples are possible. An anchored game can be far from having this property. Instead, we use the anchors as a ``home base'' that is connected to all questions. Intuitively, no matter what question tuple $(x,y)$ we are considering, it is only a few replacements away from the set of anchor questions. Thus the dependency of the variable $\Omega$ on the questions can be iteratively removed by ``switching'' each player's question to an anchor as
$$
	\P_{\bOmega | \bX_i = x,\bY_i = y,W_C} \,\approx\, \P_{\bOmega | \bX_i = x,\bY_i = \dummy,W_C} \approx \P_{\bOmega | \bX_i = \dummy,\bY_i = \dummy,W_C}~.
$$

The dependency-breaking states $\ket{\Phi_{x,y}}$ we define are more complicated than those used in the free games case; in particular they also depend on the classical dependency-breaking variable $\bOmega$. To show that the states satisfy the Sampleability property we prove a sequence of approximations: first we show that for most $x$ there exists a unitary $U_x$ such that $(U_x \otimes \Id) \ket{\Phi_{\dummy, \dummy}} \approx \ket{\Phi_{x,\dummy}}$, where $\ket{\Phi_{\dummy, \dummy}}$ denotes the dependency-breaking state in the case that both players receive the anchor question ``$\dummy$'', and $\ket{\Phi_{x,\dummy}}$ denotes the state when the first player receives $x$ and the second player receives ``$\dummy$''. Then we show that on average over $y$ there exists a unitary $V_y$ such that $(\Id \otimes V_y) \ket{\Phi_{x,\dummy}} \approx \ket{\Phi_{x,y}}$. Interestingly a crucial component of our proof is to argue the existence of a local unitary $R_{x,y}$ that depends on \emph{both} inputs $x$ and $y$. The unitary $R_{x,y}$ is not implemented by either player in the strategy for the single-shot game $G$, but it is needed to show that $V_y$ maps $\ket{\Phi_{x,\dummy}}$ close to $\ket{\Phi_{x,y}}$. 
 \section{Preliminaries}
\label{sec:prelim}

\subsection{Sets and indices} 
For an integer $n$ we let $[n] = \{1,\ldots,n\}$. For a finite set $\X$ we let $\X^n$ denote the $n$-fold Cartesian product of $\X$. We denote elements of $\X^n$ by boldfaced letters $\bx = (\bx_1,\ldots,\bx_n)$. For a subset $C = \{i_1,\ldots,i_t\} \subseteq [n]$ we let $\bx_C$ denote the ordered tuple $(\bx_{i_1},\ldots,\bx_{i_t})$.

\subsection{Probability distributions, random variables, and expectations}\label{subsec:prob_dist}

We use capital letters to denote random variables and lower case letters to denote values taken by the random variable. We use boldfaced letters to denote tuples, e.g.\ $\bX = (\bX_1,\ldots,\bX_n)$, $\bx = (\bx_1,\ldots,\bx_n)$. For a subset $C \subseteq [n]$ we write $\bX_C$ to denote the $|C|$-tuple formed by  coordinates of $\bX$ indexed by $C$. 

We use $\P_X$ to denote the distribution of random variable $X$ and $\P_X(x)$ to denote the probability that $X = x$ for some value $x$. For multiple random variables, e.g.\ $X, Y, Z$, $\P_{XYZ}(x,y,z)$  denotes their joint distribution. All random variables are assumed to operate on the same probability space, which is usually implicit and clear from context. 

We use $\P_{Y | X = x}(y)$ to denote the conditional distribution $\P_{YX}(y,x)/\P_X(x)$, which is defined when $\P_X(x) > 0$. We use the shorthand $\P_{X | y,z}$ to denote the distribution $\P_{X | Y =y,Z=z}$. For example, we may write $\P_{V | \omega_\mi, \bx_i, \by_i}$ to denote $\P_{V | \Omega_\mi = \omega_\mi, \bX_i = \bx_i, \bY_i = \by_i}$. For an event $W$ we let $\P_{X Y | W}$ denote the distribution conditioned on $W$. We use the notation $\Ex_{X} f(x)$ and $\Ex_{\P_X} f(x)$ to denote the expectation $\sum_{x} \P_X(x) f(x)$. 

Let $\P_{XY}$ be a joint distribution on $\X \times \Y$ and let $W$ denote an event. Then we define the distribution $\P_{X|W} \P_{Y|X}$ over $\X \times \Y$ as
\[
	(\P_{X|W} \P_{Y|X})(x,y) = \P_{X|W}(x) \cdot \P_{Y | X = x}(y)~.
\]

For distributions $P_X$ and $P_Y$ over the same set $\X$ we use $\| \P_{X_0} - \P_{X_1} \|$ to denote their total variation distance, 
$$\| \P_{X_0} - \P_{X_1} \| \,=\, \frac{1}{2}\sum_{x \in \X} |\P_{X_0}(x) - \P_{X_1} (x)|\;.$$

The following simple lemmas will be frequently used. 

\begin{lemma}\label{lem:trivial}
Let $\Qsf_F$ and $\Ssf_F$ be two probability distributions for  random variable $F$ and let $\Rsf_{G | F}$ be a conditional probability distribution for random variable $G$, conditioned on $F$. Then
\[ \big\| \Qsf_{F} \Rsf_{G|F} - \Ssf_{F} \Rsf_{G|F}\big\|\,=\, \big\|\Qsf_{F}-\Ssf_{F}\big\|\;. \]
Similarly, for two conditional probability distributions $\Qsf_{G|F}, \Ssf_{G|F}$ and a distribution $\Rsf_F$, 
\[
\big\| \Rsf_F \Qsf_{G|F} - \Rsf_F \Ssf_{G|F} \big\| \,=\, \Ex_F \big\| \Qsf_{G|F = f} - \Ssf_{G|F=f} \big\|\;,
\]
where $\Ex_F$ denotes the expectation over sampling $f$ from $\Rsf_F$.
\end{lemma}

\begin{proof}
For the first part of the lemma using the definition we note that $ \| \Qsf_{F} \Rsf_{G|F} - \Ssf_{F} \Rsf_{G|F}\|$ is equal to
\begin{align*}
 \frac{1}{2}\sum_{f,g} | \Qsf(f) \Rsf(g|f)- \Ssf(f) \Rsf (g|f)|&=  \frac{1}{2}\sum_{f} |\Qsf(f)- \Ssf(f)| \cdot  \Big( \sum_{g} \Rsf(g|f)  \Big)\\
&= \frac{1}{2}\sum_{f} | \Qsf(f)- \Ssf(f)|\\
&= \|\Qsf_{F}-\Ssf_{F}\|\;.
\end{align*}
For the second part of the lemma we note that $\| \Rsf_F \Qsf_{G|F} - \Rsf_F \Ssf_{G|F} \|$ is equal to
\begin{align*}
\frac{1}{2}\sum_{f,g}	 | \Rsf(f) \Qsf(g|f) - \Rsf(f) \Ssf(g|f) | &= \sum_f \Rsf(f) \frac{1}{2} \sum_g	 | \Qsf(g|f) - \Ssf(g|f) | \\
&= \sum_f \Rsf(f) \| \Qsf_{G|F = f} - \Ssf_{G|F=f} \|\;.
\end{align*}
\end{proof}

\begin{lemma}[Data processing inequality]\label{lem:data-processing}
Let $\Qsf_{FG}$ and $\Ssf_{FG}$ denote two probability distributions for random variables $F,G$. Then 
\[
	\big \| \Qsf_{F} - \Ssf_F \big \| \leq \big \| \Qsf_{FG} - \Ssf_{FG} \big \|~.
\]
\end{lemma}
\begin{proof}
Expanding the definition of $\big \| \Qsf_{F} - \Ssf_F \big \|$, we get
	\begin{align*}
		\frac{1}{2} \sum_f  \big | \Qsf_F(f) - \Ssf(f) \big | &= \frac{1}{2} \sum_f \big | \sum_g \big( \Qsf_F(f) \cdot \Qsf_{G|F = f}(g) - \Ssf_F(f) \cdot \Ssf_{G|F=f}(g) \big) \big| \\
										&\leq \frac{1}{2} \sum_{f,g} \big | \Qsf_F(f) \cdot \Qsf_{G|F = f}(g) - \Ssf_F(f) \cdot \Ssf_{G|F=f}(g) \big| \\
										&= \big \| \Qsf_{FG} - \Ssf_{FG} \big \|
	\end{align*}
	where the second line follows from the triangle inequality.
\end{proof}

\subsection{Quantum states and measurements}

All Hilbert spaces considered in this paper are finite-dimensional. We write $\Id$ to denote the identity operator. For Hermitian matrices $A, B$ we write $A \leq B$ to indicate that $A - B$ is positive semidefinite. For a linear operator $X$ acting on a complex vector space we use $X[\cdot]$ to denote the map that takes linear operators $\rho \mapsto X\rho X^\dagger$.
For a vector $\ket{\psi}$,we use $\| \ket{\psi} \|$ to denote its Euclidean length. For a matrix $A$ we use $\| A \|_1$ to denote its \emph{trace norm} $\Tr(\sqrt{A A^\dagger})$. 

A density matrix is a positive semidefinite matrix with trace $1$. A positive operator-valued measure (POVM) $M$ acting on $\C^d$ with outcome set $\A$ is denoted as a function $M$ mapping outcomes $a \in \A$ to positive semidefinite operators acting on $\C^d$ such that $\sum_{a\in\A} M(a) = \Id$. 

We use the convention that, when $\ket{\psi}$ is a pure state, $\psi$ refers to the rank-1 density matrix $\ketbra{\psi}{\psi}$. We use superscripts to label subsystems. For example, $\rho^{AB}$ denotes a density matrix on systems $A$ and $B$, i.e.\ $\rho^{AB} \in \mH_A\otimes \mH_B$ for finite-dimensional Hilbert spaces $\mH_A$ and $\mH_B$ that will always be clear from context. $\rho^A$ and $\rho^B$ are used to denote the partial trace of $\rho^{AB}$ on the first and second subsystem respectively. A \emph{classical-quantum (CQ)} state $\rho^{XE}$ is classical on $X$ and quantum on $E$ if it can be written as $\rho^{XE} = \sum_{x} p(x) \ketbra{x}{x}^X \otimes \rho^E_x$ for $\{\ket{x}\}$ the canonical basis of system $X$ and some probability measure $p(\cdot)$. We write $\rho^{XE}_x$ to denote the state $\ketbra{x}{x}^X \otimes \rho^E_x$.

\subsection{Nonlocal games}

\begin{definition}[Nonlocal game]
A \emph{nonlocal game} $G$ (or \emph{game} for short) is a tuple $(\X \times \Y,\A \times \B,\mu,V)$ where $\X, \Y$ are finite sets called the \emph{question sets}, $\A, \B$ are finite sets called the \emph{answer sets}, $\mu$ is a distribution over $\X \times \Y$ called the \emph{question distribution}, and $V: \X \times \Y \times \A \times \B \to \{0,1\}$ is a predicate called the \emph{game predicate}.
\end{definition}

\begin{definition}[Parallel repetition of a nonlocal game]
Let $G = (\X \times \Y,\A \times \B,\mu,V)$ be a nonlocal game and $n\geq 1$ an integer. The \emph{$n$-fold parallel repetition} of $G$ is the nonlocal game $G^n = (\X^n \times \Y^n, \A^n \times \B^n,\mu^n,V^n)$ where $\mu^n$ is the product distribution $\mu \times \cdots \times \mu$ over $(\X \times \Y)^n$ and $V^n: \X^n \times \Y^n \times \A^n \times \B^n \to \{0,1\}$ is the predicate defined as
\[
	V^n(\bx,\by,\ba,\bb) \,=\, \prod_{i=1}^n V(\bx_i,\by_i,\ba_i,\bb_i)
\]
for all $\bx \in \X^n,\by \in \Y^n,\ba \in \A^n,\bb \in \B^n$.
\end{definition}

\begin{definition}[Quantum strategy]
Let $G = (\X \times \Y,\A \times \B,\mu,V)$ be a nonlocal game. A \emph{quantum strategy $\strategy$ for the game $G$} is a tuple $(\ket{\psi}, A,B)$ where $\ket{\psi}$ is a state in $\C^d \otimes \C^d$ for some positive integer $d$ and $A = \{ A_x \}_{x \in \X}$ and $B = \{B_y\}_{y \in \Y}$ are collections of POVMs (with outcomes in $\A$ and $\B$, respectively) acting on $\C^d$. The \emph{dimension of $\strategy$} is $d$. 
\end{definition}

\begin{definition}[Quantum value of a nonlocal game]
Let $G = (\X \times \Y,\A \times \B,\mu,V)$ be a nonlocal game and $\strategy$ a strategy for $G$. The \emph{quantum value of the strategy $\strategy$ in the game $G$} is defined as
\[
	\eval(G,\strategy) \,=\, \sum_{(x,y) \in \X \times \Y} \mu(x,y) \,  \sum_{(a,b) \in \A \times \B} V(x,y,a,b) \, \bra{\psi} A_x(a) \otimes B_y(b) \ket{\psi}\;.
\]
The \emph{quantum value of $G$} is defined as
\[
	\eval(G) \,=\, \sup_{\strategy} \eval(G,\strategy)\;,
\]
where the supremum is over the set of all quantum strategies for $G$.
\end{definition}

\begin{definition}[Entanglement requirements of a nonlocal game]
Let $G = (\X \times \Y,\A \times \B,\mu,V)$ be a nonlocal game, and let $p \in [0,1]$. Define $\E(G,p)$ to denote the minimum integer $d \in \N$ such that there exists a $d$-dimensional quantum strategy $\strategy$ for $G$ such that $\eval(G,\strategy) \geq p$. If there is no such strategy then we define $\E(G,p) = +\infty$.
\end{definition}

\section{Dependency-breaking variables, states, and measurements}
\label{sec:quantum_setup}

We introduce random variables, entangled states and operators that will be used in the proof of Theorem~\ref{thm:anchorpr_quantum}. For the entirety of the section we fix an $0<\alpha < 1$ and an $\alpha$-anchored two-player game $G = (\X \times \Y,\A \times \B,\mu,V)$. We further fix an integer $n\geq 1$ and a strategy $\strategy^n = (\ket{\psi},A,B)$ for the parallel-repeated game $G^n$. By conjugating the second player's measurement operators by a unitary $U$ and exchanging $\ket{\psi}$ with $\Id \otimes U \ket{\psi}$ we may assume without loss of generality that $\ket{\psi}$ can be written as 
\begin{equation}\label{eq:psi-sym}
 \ket{\psi} \,=\, \sum_j \sqrt{\lambda_j} \ket{v_j} \ket{v_j}\,\in \, \C_{E_A}^d \otimes \C_{E_B}^d\;,
\end{equation}
for some orthonormal basis $\{ \ket{v_j} \}_j$ of $\C^d$. Here we used $d$ to denote the dimension of the strategy $\strategy$ and label the two systems on which $\ket{\psi}$ lies $E_A$ and $E_B$ respectively. 

This section is divided into three parts. First we define some random variables and their joint distribution $\P$. Then we state useful lemmas about conditioned distributions. Finally we describe states and operators used in the proof of Theorem~\ref{thm:anchorpr_quantum}.

\subsection{The probability measure $\P$} 
\label{sec:p_setup}

Let $\mu_X$ and $\mu_Y$ denote the marginals of the game distribution $\mu_{XY}$ on the first and second coordinates, respectively. 
We first define a ``single copy'' distribution $\hat{P}$ as the law of random variables $(D,M,X,Y)$ which are defined as follows. Each random variable may depend on previously defined ones. We start with $D$, which is distributed uniformly over $\{A,B\}$. Fix a ``noise'' parameter 
\begin{equation}
\label{eq:def-eta}
\eta \,=\, \frac{\alpha}{2}\;.
\end{equation}
Let $M$ have the following distribution over $\X\times \Y$: for all $x \in \X$ and $y \in \Y$, 
\[
	\P_{M | D = A}(x) = \left\{
	\begin{array}{ll}
		\frac{\mu_X(x)}{1 - \eta}  & \mbox{if $x \neq \dummy$} \\[4mm]
		\frac{\alpha - \eta}{1 - \eta} & \mbox{if $x = \dummy$ }
	\end{array}
\right. \quad \text{and} \quad
	\P_{M | D = B}(y) = \left\{
	\begin{array}{ll}
		\frac{\mu_Y(y)}{1 - \eta}  & \mbox{if $y \neq \dummy$} \\[4mm]
		\frac{\alpha - \eta}{1 - \eta} & \mbox{if $y = \dummy$ }
	\end{array}
\right. \;.
\]
In other words, conditioned on $D=A$ (resp. $D = B$), the variable $M$ takes on a value in $\X$ (resp. $\Y$) from a rescaled version of the distribution $\mu_X$ (resp. $\mu_Y$) where less weight is given to the dummy question $\dummy$. Finally, define the random variables $(X,Y)$ as follows. 
\begin{itemize}
	\item If $D = A$ then $X$ is chosen to be an ``$\eta$-noisy'' copy of $M$. Precisely, $X = M$ with probability $1 - \eta$ and $X = \dummy$ with probability $\eta$. Define $Y$ to equal $y$ with probability $\mu_{Y|X}(y | m)$, where $m$ is the value of $M$. 
	\item If $D = B$ then $Y$ is chosen to be an ``$\eta$-noisy'' copy of $M$ and $X$ equals $x$ with probability $\mu_{X|Y}(x | m)$, where $m$ is the value of $M$. 
\end{itemize}
This specifies the distribution $\hat{P}$. 
The following claims will be frequently used. 

\begin{claim}
\label{clm:dependency-breaking-1}
	Conditioned on $(D,M)$ the random variables $X$ and $Y$ are independent. 
\end{claim}
\begin{proof}
It follows directly from the construction that for every $(d,m)$ and $(x,y)$, 
\[
		\hat{\P}_{XY | D=d,M=m}(x,y) \,=\, \hat{\P}_{X | D=d,M=m}(x) \cdot \hat{\P}_{Y | D=d,M=m}(y)\;.\]
\end{proof}

\begin{claim}
\label{clm:dependency-breaking-2}
 $\hat{\P}_{XY|D = A}(x,y) = \hat{\P}_{XY|D=B}(x,y) = \mu_{XY}(x,y)$ for all $x \in \X,y \in \Y$. In particular, the marginal distribution $\hat{\P}_{XY}$ is identical to the game distribution $\mu$. 
\end{claim}

\begin{proof}
	Fix $D = A$ and suppose that $x = \dummy$. Using the conditional independence stated in Claim~\ref{clm:dependency-breaking-1} we can write 
	\begin{align*}
		\hat{\P}_{XY | D = A}(\dummy,y) &= \sum_{x' \in \X} \hat{\P}_{M | D = A}(x') \cdot \hat{\P}_{X | M=x',D=A}(\dummy) \cdot  \hat{\P}_{Y | M = x', D = A}(y)  \\
		&= \hat{\P}_{M | D = A}(\dummy) \cdot \hat{\P}_{X | M = \dummy, D = A}(\dummy) \cdot  \hat{\P}_{Y | M = \dummy, D = A}(y) \\
		& \qquad \qquad + \sum_{x' \in \X \setminus \{\dummy\}} \hat{\P}_{M | D = A}(x') \cdot \hat{\P}_{X | M = x', D = A}(\dummy) \cdot  \hat{\P}_{Y | M = x, D = A}(y) \\
		&= \frac{\alpha - \eta}{1 - \eta} \cdot \mu_Y(y) + \frac{\eta}{1 - \eta} \sum_{x' \in \X \setminus \{\dummy\}} \mu_X(x') \cdot \mu_{Y|X}(y|x') \\
		&= \frac{\alpha - \eta}{1 - \eta} \cdot \mu_Y(y) + \frac{\eta}{1 - \eta} \Big ( \mu_Y(y) - \mu_{XY}(\dummy,y) \Big ) \\
		&= \frac{\alpha - \eta}{1 - \eta} \cdot \mu_Y(y) + \frac{\eta}{1 - \eta} \Big ( \mu_Y(y) - \mu_{Y}(y) \cdot \alpha \Big ) \\
		&= \alpha \cdot \mu_Y(y) \\
		&= \mu_{XY}(\dummy,y)~.
	\end{align*}
	Now suppose that $x \neq \dummy$. The random variable $X$ can only take this value if $M = x$, so we have
	\begin{align*}
		\hat{\P}_{XY | D = A}(x,y) &= \hat{\P}_{M | D = A}(x) \cdot \hat{\P}_{X | M = x, D = A}(x) \cdot  \hat{\P}_{Y | M=x, D=A}(y) \\
		&= \frac{1}{1 - \eta} \cdot \mu_X(x) \cdot (1 - \eta) \cdot \mu_{Y|X}(y | x) \\
		&= \mu_{XY}(x,y)~.
	\end{align*}
	A symmetric argument holds for the case $D = B$. This shows the claim.
\end{proof}

We define a distribution $\P$ as follows. Let $\bD = (\bD_1,\ldots,\bD_n)$, $\bM = (\bM_1,\ldots,\bM_n)$, $\bX = (\bX_1,\ldots,\bX_n)$, and $\bY = (\bY_1,\ldots,\bY_n)$ be vectors of random variables, and define
\[
	\P_{\bD \bM \bX \bY} \,=\, \prod_{i=1}^n \hat{\P}_{\bD_i \bM_i \bX_i \bY_i}~.
\]
Finally, define random variables $\bA = (\bA_i)_{i \in [n]}$ and $\bB = (\bB_i)_{i \in [n]}$ as follows. Conditioned on $\bX$ and $\bY$, the random variables $\bA,\bB$ are independent of $\bD$ and $\bM$. For all realizations $\bx, \by$ of $\bX$ and $\bY$, define
\[
	\P_{\bA \bB | \bX = \bx, \bY = \by}(\ba,\bb) \,=\, \bra{\psi} A_\bx(\ba) \otimes B_\by(\bb) \ket{\psi}
\]
for all $\ba = (\ba_1,\ldots,\ba_n) \in \A^n$ and $\bb = (\bb_1,\ldots,\bb_n) \in \B^n$. $\bA$ and $\bB$ are jointly distributed as answers produced by players using strategy $\strategy^n$ when their question pair for the $i$-th game is $(\bx_i,\by_i)$. 

\begin{claim}\label{claim:abxy}
The marginal distribution of $\bX \bY \bA \bB$ is identical to the the joint distribution of questions and answers obtained in the repeated game $G^n$ when the players use the strategy $\strategy^n$.
\end{claim}

\begin{proof}
Since by definition conditioned on $\bX, \bY$, $\bA$ and $\bB$ are independent of $\bD,\bM$ we have that the marginal distribution of $\bX \bY \bA \bB$ is
	\[
		\P(\bx,\by,\ba,\bb) = \Big ( \prod_{i=1}^n \mu(\bx_i,\by_i) \Big) \cdot \bra{\psi} A_\bx(\ba) \otimes B_\by(\bb) \ket{\psi}\;,
	\]
	which matches the joint distribution of questions and answers in the repeated game $G^n$ when the players use the strategy $\strategy^n$.
	\end{proof}

\subsection{Dependency-breaking variables} 
\label{sec:dep-var}

We introduce \emph{dependency-breaking variables}. These are crucial for controlling the correlations that arise when conditioning the  distribution $\P$ on different events. 

Let $C \subseteq [n]$ be a set of coordinates for the repeated game $G^n$. 
Let $(\bX_C,\bY_C)$ and $(\bA_C,\bB_C)$ be random variables associated with the players' questions and answers in the coordinates indexed by $C$. For $i \in [n]$ let $W_i$ denote the indicator variable for the event that the players win round $i$. Using~Claim~\ref{claim:abxy}, $W_i$ is the event that $V(\bX_i,\bY_i,\bA_i,\bB_i) = 1$ where $\bX,\bY,\bA,\bB$ are the random variables defined in Section~\ref{sec:p_setup}. Let $W_C = \prod_{i \in C} W_i$. 

\begin{definition}
For all $i \in [n] \setminus C$ define the \emph{$i$-th dependency-breaking variable $\bOmega_i$} as 
\[\bOmega_i \,=\, (\bD_i,\bM_i)\;.\]
Further define 
\[ \bOmega \,=\, (\bOmega_i)_{i \in [n] \setminus C} \cup (\bX_C,\bY_C)\qquad \text{and}\qquad \bOmega_\mi\,=\,(\bOmega_j)_{j \in [n] \setminus (C \cup \{i\})} \cup (\bX_C,\bY_C) \;,\quad \forall i \in [n] \setminus C\;.\]
Finally, define variables
\[\bR \,=\, (\bOmega,\bA_C,\bB_C)\qquad\text{and}\qquad\bR_\mi \,=\, (\bOmega_\mi,\bA_C,\bB_C)\;,\qquad \forall i\in[n]\backslash C\;.\]
\end{definition}

 When $\eta = 0$ the definition of $\bOmega_i$ coincides with the one used by Holenstein~\cite{Hol09}; in that case, the variable $\bM_i$ is coupled to either $\bX_i$ or $\bY_i$ exactly. Here we set $\eta$ to be a nonzero value that is related to $\alpha$, the anchoring probability, as in~\eqref{eq:def-eta}. This ``noisy coupling'' between $\bOmega_i$ and the inputs $(\bX_i,\bY_i)$ is important for our analysis. 

We denote realizations of the random variable $\bOmega$ using $\bomega = (\bomega_i)_{i \in [n] \setminus C} \cup (\bx_C,\by_C)$, and $\bOmega_\mi$ using $\bomega_\mi$. Similarly we denote realizations of $\bR$ as $\br = (\bomega,\ba_C,\bb_C)$ and $\bR_\mi$ as $\br_\mi = (\bomega_\mi,\ba_C,\bb_C)$.

\begin{claim}\label{claim:ufacts}
The following properties hold for the distribution $\P$.
\begin{enumerate}
\item The joint distribution $(\bX,\bY,\bOmega)$ is product across its $n$ triples of coordinates. Furthermore, for any $i$, $\bX_i$ and $\bY_i$ are independent conditioned on $\bOmega_i$. In particular,
 $\P_{\bOmega_i \bX_i \bY_i} =\P_{\bOmega_i} \P_{\bX_i | \bOmega_i} \P_{\bY_i | \bOmega_i}$. 
\item $\P_{\bA_C \bB_C | \bX \bY} = \P_{\bA_C \bB_C | \bOmega \bX \bY}$.
\end{enumerate}
\end{claim}

\begin{proof} 
\hfill
\begin{enumerate}
\item The first part is by construction, and the second part by Claim~\ref{clm:dependency-breaking-1} and the definition of $\bOmega_i$. 
\item By definition, conditioned on $\bX$ and $\bY$ the random variables $(\bA ,\bB)$ are independent from $(\bD,\bM)$ and thus from $\bOmega = (\bD,\bM)_{i\in [n]\setminus C} \cup (\bX_C,\bY_C)$. 
\end{enumerate}
\end{proof}
\subsection{Individual coordinates are relatively unaffected by conditioning} 

Let $C\subseteq [n]$ be a fixed set of coordinates such that $\P(W_C) > 0$, and let $m=n-C$. For convenience, up to relabeling we assume that $C$ contains the last $n-m$ coordinates $C=\{m+1,\ldots,n\}$. Define
\begin{equation}
\label{eq:delta}
	\delta = \frac{1}{m} \left ( \log \frac{1}{\P(W_C)} + |C| \log |\A| |\B| \right)\;.
\end{equation}

The following lemma shows that if the event $W_C$ occurs with significant probability then conditioning on $W_C$ only has a moderate effect on the distribution of $(\bX_i, \bY_i)$, on average over a uniformly random choice of $i \in [n] \setminus C$. Furthermore, the distribution of $\bR_\mi$ is close to being independent from $(\bX_i,\bY_i)$. 

\begin{lemma}
\label{lem:classical_skew}
The following inequalities hold:
\begin{enumerate}
	\item $\frac{1}{m}\sum_{i=1}^m \| \P_{\bOmega_i \bX_i \bY_i | W_C} -  \P_{\bOmega_i \bX_i \bY_i} \| \leq \sqrt{\delta}$.
	\item $\frac{1}{m}\sum_{i=1}^m\|  \P_{\bR \bX_i \bY_i | W_C} -  \P_{\bR |W_C} \P_{\bX_i \bY_i | \bOmega_i } \| \leq \sqrt{\delta}$.
	\item $\frac{1}{m}\sum_{i=1}^m\| \P_{\bOmega_i |W_C} \P_{\bR_\mi | \bX_i = \dummy, \bY_i = \dummy, W_C} -\P_{\bOmega_i | W_C} \P_{\bR_\mi |\bOmega_i  W_C}  \big \| = O(\sqrt{\delta}/\alpha^2)$.
	\item $\frac{1}{m}\sum_{i=1}^m\big\|  \P_{\bX_i \bY_i}  \P_{\bR_\mi | \bX_i = \dummy, \bY_i = \dummy, W_C} -  \P_{\bX_i \bY_i}  \P_{\bR_\mi | \bX_i, \bY_i, W_C} \big\| = O(\sqrt{\delta}/\alpha^2)$.
\end{enumerate}
\end{lemma}

The proof of Lemma~\ref{lem:classical_skew} makes use of Lemma 4.1 and Corollary 4.3 from~\cite{Hol09}, which we restate for convenience. 

\begin{lemma}[Lemma 4.1 of~\cite{Hol09}]
\label{lem:hol}
Let $\bU = (\bU_1,\ldots,\bU_m)$ be a vector of random variables and let $\P_{\bU} = \P_{\bU_1} \cdots \P_{\bU_m}$. Let $W$ denote an event. Then
\[
	\sum_{i=1}^m \, \big \| \P_{\bU_i} - \P_{\bU_i | W} \big \|^2 \,\leq \, \log \frac{1}{\P(W)}~.
\]
\end{lemma}

\begin{corollary}[Corollary 4.3 of~\cite{Hol09}]
\label{cor:hol}
Let $T$, $V$ be random variables and let $\bU = (\bU_1,\ldots,\bU_m)$ be a vector of random variables. Let $\P_{T\bU V} = \P_T \cdot \P_{\bU_1 | T} \cdots \P_{\bU_m} \cdot \P_{V|T \bU}$. Let $W$ denote an event. Then
\[
	\frac{1}{m} \sum_{i=1}^m \big \| \P_{T \bU_i V|W} - \P_{TV|W} \P_{\bU_i | T} \big \| \,\leq\, \sqrt{ \frac{1}{m} \Big ( \log(|\mathcal{V}^*|) + \log \frac{1}{\P(W)}  \Big)}\;,
\]
where $\mathcal{V}^* = \{ v : \P_{V|W}(v) > 0 \}$. 
\end{corollary}

\begin{proof}[Proof of \Cref{lem:classical_skew}]
\hfill
\begin{enumerate}
\item For $i\in [n]\backslash C$ let $\bU_i$ denote the tuple $(\bOmega_i,\bX_i,\bY_i)$, and let $W$ denote the event $W_C$. Note that all of the $\bU_i$ are independent of each other by construction, and thus applying \Cref{lem:hol} we get
\[
\sum_{i =1}^m \, \big \| \P_{\bOmega_i \bX_i \bY_i} - \P_{\bOmega_i \bX_i \bY_i | W_C} \big \|^2 \leq \log \frac{1}{\P(W_C)}~.
\]
Dividing by $m$ on both sides and using Jensen's inequality we get
\begin{align*}
\frac{1}{m} \sum_{i=1}^m \, \big \| \P_{\bOmega_i \bX_i \bY_i} - \P_{\bOmega_i \bX_i \bY_i | W_C} \big \| &\leq \sqrt{ \frac{1}{m} \sum_{i=1}^m \, \big \| \P_{\bOmega_i \bX_i \bY_i} - \P_{\bOmega_i \bX_i \bY_i | W_C} \big \|^2} \\
&\leq \sqrt{ \frac{1}{m} \log \frac{1}{\P(W_C)}}\,
\end{align*}
which by definition of $\delta$ in~\eqref{eq:delta} is at most $\sqrt{\delta}$. \item Let $T$ denote the random variable $\bOmega$, let $\bV_i$ denote the pair $(\bX_i,\bY_i)$, and let $V$ denote the pair $(\bA_C,\bB_C)$. Since $\bV_1,\ldots,\bV_m$ are independent, even conditioned on $\bOmega$ which has a product distribution, we can apply 
\Cref{cor:hol} and get
\begin{align}
	\frac{1}{m} \sum_{i=1}^m \big \| \P_{\bR \bX_i \bY_i|W_C} - \P_{\bR|W_C} \P_{\bX_i \bY_i | \bOmega} \big \| &\leq \sqrt{ \frac{1}{m} \left ( \log \Big ( (|\A| \cdot |\B|)^{|C|} \Big) + \log \frac{1}{\P(W_C)}  \right)} \notag\\
	&= \sqrt{\delta}\;,\label{eq:classical-skew-1}
\end{align}
where we used the fact that the support of $V$ has size at most $(|\A| \cdot |\B|)^{|C|}$. We then use that $\P_{\bX_i \bY_i | \bOmega} = \P_{\bX_i \bY_i | \bOmega_i}$ to obtain the second item.
\item We start by rewriting~\eqref{eq:classical-skew-1} using Bayes' rule to obtain
\begin{equation}
\label{eq:classical-skew-2}
\frac{1}{m} \sum_{i=1}^m \big \| \P_{\bOmega_i |W_C} \P_{\bX_i \bY_i | \bOmega_i W_C} \P_{\bR_\mi | \bOmega_i \bX_i \bY_i W_C} - \P_{\bOmega_i | W_C} \P_{\bX_i \bY_i | \bOmega_i} \P_{\bR_\mi |\bOmega_i  W_C}  \big \| \leq \sqrt{\delta}\;,
\end{equation}
where we used that by definition $\bR = (\bOmega_i,\bR_\mi)$ and for the second term that the joint distribution of $(\bX,\bY,\bOmega)$ is product across all $n$ coordinates.  
Next we show that 
\begin{equation}
\label{eq:classical-skew-2c}
\P_{\bR_\mi | \bOmega_i \bX_i \bY_i W_C} = \P_{\bR_\mi |\bX_i \bY_i W_C}\;.
\end{equation}
This follows from repeatedly applying Bayes' rule. In detail, for fixed $\br_\mi$ that implies the event $W_C$ \footnote{Since $\bR_\mi$ includes the random variables $(\bX_C,\bY_C,\bA_C,\bB_C)$, the event $W_C$ is determined by $\bR_\mi$.} and fixed $\bomega_i, \bx_i, \by_i$, 
\begin{align}
	\P_{\bR_\mi | \bomega_i, \bx_i, \by_i, W_C}(\br_\mi) &= \frac{\P_{\bR_\mi | \bomega_i, \bx_i, \by_i}(\br_\mi) \cdot \P(W_C | \bomega_i, \bx_i, \by_i, \br_\mi)}{\P(W_C | \bomega_i, \bx_i, \by_i) } \notag \\
	&= \frac{\P_{\bR_\mi | \bomega_i, \bx_i, \by_i}(\br_\mi)}{\P(W_C | \bomega_i, \bx_i, \by_i) }\;, \label{eq:classical-skew-2b}
\end{align}
where we used that since $\br_\mi$ implies the event $W_C$, $\P(W_C | \bomega_i, \bx_i, \by_i, \br_\mi) = 1$. Letting $\br_\mi = (\bomega_\mi,\ba_C,\bb_C)$, we have
\begin{align*}
	\P_{\bR_\mi | \bomega_i, \bx_i, \by_i}(\br_\mi) &= \P_{\bOmega_\mi | \bomega_i,\bx_i, \by_i}(\bomega_\mi) \cdot \sum_{\bx_\mi,\by_\mi} \P_{\bX_\mi \bY_\mi | \bomega, \bx_i, \by_i}(\bx_\mi,\by_\mi) \cdot \P_{\bA_C \bB_C | \bomega, \bx, \by}(\ba_C,\bb_C) \\
	&= \P_{\bOmega_\mi | \bx_i, \by_i}(\bomega_\mi) \cdot \sum_{\bx_\mi,\by_\mi} \P_{\bX_\mi \bY_\mi | \bomega_\mi, \bx_i, \by_i}(\bx_\mi,\by_\mi) \cdot \P_{\bA_C \bB_C | \bx, \by}(\ba_C,\bb_C) \\
	&= \P_{\bR_\mi | \bx_i, \by_i}(\br_\mi)\;,
\end{align*}
where in the second line we used that $\bOmega_\mi$, $\bX_\mi$, $\bY_\mi$ are independent of $\bOmega_i$ and Item 2 of \Cref{claim:ufacts}. This implies that 
\[
\P(W_C | \bomega_i, \bx_i,\by_i) = \sum_{\br_\mi \in W_C} \P_{\bR_\mi | \bomega_i,\bx_i,\by_i} (\br_\mi) = \sum_{\br_\mi \in W_C} \P_{\bR_\mi | \bx_i,\by_i} (\br_\mi) = \P(W_C | \bx_i,\by_i)\;,
\]
where $\br_\mi \in W_C$ denotes all $\br_\mi$ that imply the event $W_C$. Combined with~\eqref{eq:classical-skew-2b} this shows~\eqref{eq:classical-skew-2c}.

Item 1 of the present lemma combined with the data processing inequality (\Cref{lem:data-processing}) implies that 
\begin{equation}\label{eq:classical-skew-2bb}
\frac{1}{m} \sum_{i=1}^m \big\| \P_{\bOmega_i | W_C} - \P_{\bOmega_i}  \big\| \leq \sqrt{\delta}\;.
\end{equation}
Using \Cref{lem:trivial} we get that $\frac{1}{m} \sum_{i=1}^m \| \P_{\bOmega_i } \P_{\bX_i \bY_i | \bOmega_i} - \P_{\bOmega_i | W_C} \P_{\bX_i \bY_i | \bOmega_i} \| \leq \sqrt{\delta}$ and thus combined with Item 1 and the triangle inequality we have 
\[
\frac{1}{m} \sum_{i=1}^m \| \P_{\bOmega_i | W_C} \P_{\bX_i \bY_i | \bOmega_i W_C} - \P_{\bOmega_i | W_C} \P_{\bX_i \bY_i | \bOmega_i} \| \leq 2\sqrt{\delta}~.
\]
\Cref{lem:trivial} then implies that~\eqref{eq:classical-skew-2} is within at most $2\sqrt{\delta}$ of
\begin{equation}
\label{eq:classical-skew-2d}
\frac{1}{m} \sum_{i=1}^m \big \| \P_{\bOmega_i |W_C} \P_{\bX_i \bY_i | \bOmega_i} \P_{\bR_\mi | \bX_i \bY_i W_C} - \P_{\bOmega_i | W_C} \P_{\bX_i \bY_i | \bOmega_i} \P_{\bR_\mi |\bOmega_i  W_C}  \big \|\;.
\end{equation}
Notice that conditioned on $\bOmega_i$, by construction the variables $(\bX_i,\bY_i)$ take on the value $(\dummy,\dummy)$ with probability at least $\eta^2 = \Omega(\alpha^2)$. Thus conditioning both sides of the difference in~\eqref{eq:classical-skew-2d} on $(\bX_i,\bY_i) = (\dummy,\dummy)$ we get that 
\begin{equation}
\label{eq:classical-skew-3}
\frac{1}{m} \sum_{i=1}^m \big \| \P_{\bOmega_i |W_C} \P_{\bR_\mi | \bX_i = \dummy, \bY_i = \dummy, W_C} - \P_{\bOmega_i | W_C} \P_{\bR_\mi |\bOmega_i  W_C}  \big \| \,=\, O \Big( \frac{\sqrt{\delta}}{\alpha^2} \Big)\;,
\end{equation}
which establishes the third item.
\item We insert a fresh copy of $\P_{\bX_i \bY_i | \bOmega_i}$ on both sides of the difference in each term of~\eqref{eq:classical-skew-2d} and~\eqref{eq:classical-skew-3} to get, using Lemma~\ref{lem:trivial},
\begin{gather*}
\frac{1}{m} \sum_{i=1}^m \big \| \P_{\bOmega_i |W_C}\P_{\bX_i \bY_i | \bOmega_i}  \P_{\bR_\mi | \bX_i, \bY_i, W_C} - \P_{\bOmega_i | W_C} \P_{\bX_i \bY_i | \bOmega_i} \P_{\bR_\mi |\bOmega_i  W_C}  \big \| \leq O(\sqrt{\delta})~,  \\
\frac{1}{m} \sum_{i=1}^m \big \| \P_{\bOmega_i |W_C}\P_{\bX_i \bY_i | \bOmega_i}  \P_{\bR_\mi | \bX_i = \dummy, \bY_i = \dummy, W_C} - \P_{\bOmega_i | W_C} \P_{\bX_i \bY_i | \bOmega_i} \P_{\bR_\mi |\bOmega_i  W_C}  \big \| \leq O \Big( \frac{\sqrt{\delta}}{\alpha^2} \Big)~.
\end{gather*}
Using the triangle inequality with the two preceding inequalities we get
\[
\frac{1}{m} \sum_{i=1}^m \big \| \P_{\bOmega_i |W_C}\P_{\bX_i \bY_i | \bOmega_i}  \P_{\bR_\mi | \bX_i = \dummy, \bY_i = \dummy, W_C} - \P_{\bOmega_i | W_C} \P_{\bX_i \bY_i | \bOmega_i} \P_{\bR_\mi | \bX_i \bY_i  W_C}  \big \| \leq O \Big( \frac{\sqrt{\delta}}{\alpha^2} \Big)~.
\]
Using~\eqref{eq:classical-skew-2bb} combined with \Cref{lem:trivial} and the triangle inequality we get
\[
\frac{1}{m} \sum_{i=1}^m \big \| \P_{\bOmega_i |W_C}\P_{\bX_i \bY_i | \bOmega_i}  \P_{\bR_\mi | \bX_i = \dummy, \bY_i = \dummy, W_C} - \P_{\bOmega_i} \P_{\bX_i \bY_i | \bOmega_i} \P_{\bR_\mi | \bX_i \bY_i  W_C}  \big \| \leq O \Big( \frac{\sqrt{\delta}}{\alpha^2} \Big)~.
\]
Marginalizing over the random variable $\bOmega_i$, the preceding equation becomes
\[
\frac{1}{m} \sum_{i=1}^m \Big \| \P_{\bX_i \bY_i}  \P_{\bR_\mi | \bX_i = \dummy, \bY_i = \dummy, W_C} - \P_{\bX_i \bY_i} \P_{\bR_\mi | \bX_i \bY_i  W_C}  \Big \| \leq O \Big( \frac{\sqrt{\delta}}{\alpha^2} \Big)~,
\]
which concludes the proof of Item 4.
\end{enumerate}
\end{proof}

\subsection{Quantum states and operators}
\label{sec:states-operators} 

From the operators and state specified in the strategy $\strategy^n$ we define new operators and states that will be used in the analysis.

\paragraph{Operators.} For all $\bx \in \X^n$ and $\by \in \Y^n$ let $\{A_\bx(\ba)\}_{\ba \in \A^n}$ and $\{B_\by(\bb)\}_{\bb \in \B^n}$ denote the players' POVMs in the strategy $\strategy^n$ when receiving questions $\bx$ and $\by$, respectively. Define the following operators for each $\ba_C \in \A^C, \bb_C \in \B^C, \bx \in \X^n, \by \in \Y^n$.
\begin{equation}
\label{eq:states-operators-1}
	A_{\bx}(\ba_C) = \sum_{\ba | \ba_C} A_{\bx}(\ba) \qquad\text{and} \qquad B_{\by}(\bb_C) = \sum_{\bb | \bb_C} B_{\by}(\bb)\;,
\end{equation}
where $\ba | \ba_C$ (resp. $\bb | \bb_C$) indicates summing over all tuples $\ba$ consistent with $\ba_C$ (resp. $\bb$ consistent with $\bb_C$). Note that for all $\bx$ (resp. $\by$), the set $\{ A_\bx(\ba_C) \}$ (resp. $\{ B_\by(\bb_C) \}$) denotes a POVM with outcomes in the set $\A^C$ (resp. $\B^C$).

For all $i \in [m]$, $\bomega_\mi$, $x \in \X$, and $y \in \Y$ define
\begin{equation}
\label{eq:states-operators-2}
 A_{\bomega_\mi, x}(\ba_C) = \Ex_{\bX | \bomega_\mi, x} A_{\bx}(\ba_C) \qquad \text{and} \qquad  B_{\bomega_\mi, y}(\bb_C) = \Ex_{\bY | \bomega_\mi, y} B_{\by}(\bb_C)\;,
\end{equation}
where $\Ex_{\bX | \bomega_\mi, x}$ is shorthand for $\Ex_{\bX | \bOmega_\mi = \bomega_\mi, \bX_i = x}$ and similarly for $\Ex_{\bY | \bomega_\mi, y}$.
Let $\X_{/\dummy} = \{ \dummyx : x\in \X\}$ be a disjoint copy of $\X$. Here, for each $x\in \X$, ``$\dummyx$'' is a new symbol that is used to distinguish elements in $\X$ from elements in $\X_{/\dummy}$. 
For all $\dummyx \in \X_{/\perp}$ define 
\begin{equation}
\label{eq:states-operators-3}
	A_{\bomega_\mi,\dummyx}(\ba_C) \,=\, \eta \, A_{\bomega_\mi,\dummy}(\ba_C) + (1 - \eta) \, A_{\bomega_\mi,x}(\ba_C)\;.
\end{equation}
Note that $A_{\bomega_\mi,\dummyx}(\ba_C)$ can be equivalently defined as $\Ex_{\bX | \bOmega_\mi = \bomega_\mi, \bM_i = x} A_{\bx}(\ba_C)$. Using that all operators are positive semidefinite we observe for later use that
\begin{align}
	 A_{\bomega_\mi,\dummy}(\ba_C) &\leq\, \frac{1}{\eta}\, A_{\bomega_\mi,\dummyx}(\ba_C)\;,\label{eq:states-operators-3a}\\
 A_{\bomega_\mi,x}(\ba_C) &\leq\, \frac{1}{1-\eta}\, A_{\bomega_\mi,\dummyx}(\ba_C)\;.\label{eq:states-operators-3b}
\end{align}

\paragraph{States.}  For all $i \in [n] \setminus C$, $\br_\mi = (\bomega_\mi,\ba_C,\bb_C)$ and $x \in \X$, for all $s \in \{x,\dummyx\}$, and for all $y \in \Y$, define the (unnormalized) state
\begin{equation}\label{eq:def-state-y}
	\ket{\Phi_{\br_\mi,s,y}} \,=\, A_{\bomega_\mi,s}(\ba_C)^{1/2} \otimes B_{\bomega_\mi,y}(\bb_C)^{1/2} \, \ket{\psi}
\end{equation}
and the normalization factor
\begin{equation}\label{eq:def-gamma}
	\gamma_{\br_\mi,s,y} \,=\, \big \| \, \ket{\Phi_{\br_\mi,s,y}} \, \big \|~.
\end{equation}
Finally we let
\begin{equation}\label{eq:def-psitt}
	\ket{\wt{\Phi}_{\br_\mi,s,y}} = \gamma_{\br_\mi,s,y}^{-1} \, \ket{\Phi_{\br_\mi,s,y}}
\end{equation}
denote the normalized version of the state $\ket{\Phi_{\br_\mi,s,y}}$.

For notational convenience we often suppress the dependence on $i$, $\bomega_\mi$, $\ba_C$, and $\bb_C$ when it is clear from context. Thus, for example, when we refer to an operator such as $A_\dummyx$, we really mean the operator $A_{\bomega_\mi,\dummy\!/\bx_i}(\ba_C)$ where $\bx_i = x$. As another example, we will often write $\ket{\wt{\Phi}_{x,y}}$ to denote the state $\ket{\wt{\Phi}_\ssxy}$.

The following proposition expresses the normalization factors $\gamma_{\br_\mi,s,y}$ as a function of the probability of obtaining answers $(\ba_C,\bb_C)$ in the strategy $\strategy^n$. 

\begin{proposition}
\label{prop:gamma}
For all $s \in \X$, 
\begin{equation}\label{eq:gamma-def-1}
	\gamma_{\br_\mi,s,y} = \Big( \P_{\bA_C \bB_C | \bOmega_\mi = \bomega_\mi, \bX_i = s, \bY_i = y}(\ba_C,\bb_C) \Big)^{1/2}\;,
\end{equation}
and for all $s = \dummyx \in \X_{/\perp}$,
\begin{align}\label{eq:gamma-def-2}
	\gamma_{\br_\mi,s,y} &= \Big( \eta \, \P_{\bA_C \bB_C | \bOmega_\mi = \bomega_\mi, \bX_i = \dummy, \bY_i = y}(\ba_C,\bb_C) + (1 - \eta) \, \P_{\bA_C \bB_C | \bOmega_\mi = \bomega_\mi, \bX_i = x, \bY_i = y}(\ba_C,\bb_C)\Big)^{1/2} \\
	&= \P_{\bA_C \bB_C | \bOmega_\mi = \bomega_\mi, \bOmega_i = (A,x), \bY_i = y}(\ba_C,\bb_C)^{1/2}~.\label{eq:gamma-def-3}
\end{align}
\end{proposition}

\begin{proof}
Suppose first that $s \in \X$. Expanding the definition of $ A_{\bomega_\mi,s}(\ba_C)$ and $B_{\bomega_\mi,y}(\bb_C)$,
\begin{align*}
	\gamma_{\br_\mi,s,y}^2 &= \bra{\psi} A_{\bomega_\mi,s}(\ba_C) \otimes B_{\bomega_\mi,y}(\bb_C) \, \ket{\psi} \\
	&= \Ex_{\bX | \bomega_\mi, s} \, \Ex_{\bY | \bomega_\mi, y} \, \bra{\psi} A_\bx(\ba_C) \otimes B_\by(\bb_C) \, \ket{\psi} \\
	&= \Ex_{\bX | \bomega_\mi, s} \, \Ex_{\bY | \bomega_\mi, y} \, \bra{\psi} A_\bx(\ba_C) \otimes B_\by(\bb_C) \, \ket{\psi} \\
	&= \Ex_{\bX \bY | \bOmega_\mi = \bomega_\mi, \bX_i = s, \bY_i = y} \, \P_{\bA_C \bB_C | \bx, \by}(\ba_C,\bb_C) \\
	&= \P_{\bA_C \bB_C | \bOmega_\mi = \bomega_\mi, \bX_i = s,\bY_i = y}(\ba_C,\bb_C)~.
\end{align*}
In the second-to-last line, we used the fact that $\bX$ and $\bY$ are independent conditioned on $\bOmega_\mi, \bX_i, \bY_i$. 
The calculation for when $s = \dummyx$ follows in nearly an identical manner to establish~\eqref{eq:gamma-def-2}. The equality in~\eqref{eq:gamma-def-3} follows from how $\bX_i$ and $\bOmega_i$ are coupled together (see \Cref{sec:p_setup}). 
\end{proof}

 \section{Existence of local unitaries}
\label{sec:unitaries}

This section uses the same setup as Section~\ref{sec:quantum_setup}, with $G = (\X \times \Y,\A \times \B,\mu,V)$ an $\alpha$-anchored two-player game, $n\geq 1$ and $\strategy^n = (\ket{\psi},A,B)$ a strategy for $G$.  
The main result of the section, and the only result used outside of it, is the following proposition. The proposition specifies the existence of unitaries $U_{\br_\mi,x}$ and $V_{\br_\mi,y}$ used  in the proof of Theorem~\ref{thm:anchorpr_quantum} to define a strategy $\strategy$ for the game $G$ from the strategy $\strategy^n$. 
Recall the definition of the states $\ket{\wt{\Phi}_{\br_\mi,s,y}}$, for $s\in \X\cup \X_{/\dummy}$ and $y\in \Y$, in~\eqref{eq:def-psitt}.

\begin{proposition}
\label{prop:local_unitaries}
	For every $C\subseteq [n]$, $i\in[n]\backslash C$, $\br_\mi$, $x$ and $y$, there exists unitaries $U_{\br_\mi,x}$ acting on $E_A$ and $V_{\br_\mi,y}$ acting on $E_B$ such that
	$$
		\Ex_I \,\, \Ex_{\bR_\mi | W_C} \,\, \Ex_{XY}  \big \| (U_{\br_\mi,x} \otimes V_{\br_\mi,y}) \ket{\wt{\Phi}_{\sspp}} - \ket{\wt{\Phi}_{\ssxy}} \big\| \,=\, O\big(\delta^{1/16}/\alpha^3\big)\;,
	$$
	where $\Ex_I$ denotes the expectation over a uniformly random $i \in [n] \setminus C$, $\Ex_{\bR_\mi | W_C}$ denotes the expectation over $\br_\mi$ sampled from $\P_{\bR_\mi | W_C}$, and $\Ex_{XY}$ denotes the expectation over $(x,y)$ sampled from $\mu$. 
\end{proposition}

The remainder of the section is devoted to the proof of Proposition~\ref{prop:local_unitaries}. The proof is based on two lemmas. The first, given in Section~\ref{sec:l1}, defines the unitaries $U_{\br_\mi,x}$ and $V_{\br_\mi,y}$ as well as additional unitaries $V_{\br_\mi,x,y}$ that are used in the proof of Proposition~\ref{prop:local_unitaries}. The second, given in Section~\ref{sec:l2}, relates the normalization factors $\gamma_\xy$ and $\gamma_{\pxy}$ defined in~\eqref{eq:def-gamma}. In the next subsection we first present some results from quantum information theory. For a more comprehensive reference we refer the reader to the book~\cite{wilde2013quantum}.
 
\subsection{Some tools from quantum information theory}

For a Hermitian operator $X$ let $\mathrm{supp}(X)$ denote the projection onto its image. The \emph{relative entropy} between two positive semidefinite operators $\rho$, $\sigma$, denoted by $\D(\rho \| \sigma)$, is defined to be $\Tr(\rho (\log \rho - \log \sigma))$. The \emph{relative min-entropy} $\D_\infty(\rho \| \sigma)$ is defined as $\min\{ \lambda : \rho \preceq 2^\lambda \sigma \}$, and $+\infty$ if this set is empty. Let $\rho^{AB}$ be a bipartite state. The \emph{mutual information} $\I(A:B)_\rho$ between registers $A$ and $B$ in state $\rho$ is defined as $\D(\rho^{AB} \| \rho^A \otimes \rho^B)$. We define the \emph{conditional mutual information} $\I(A:B|C)_\rho$ for a tripartite quantum state $\rho^{ABC}$ as $\I(A:BC)_\rho - \I(A:C)_\rho$. When $\rho$ is classical on $C$, then $\I(A:B|C)_\rho = \Ex_{x \sim \rho^C} \I(A:B)_{\rho_x}$ where $x \sim \rho^C$ denotes sampling $x$ from the classical distribution $\rho^C$.

\begin{proposition}[Theorem 11.8.1 of~\cite{wilde2013quantum}]
\label{prop:relative_entropy_nonneg}
For all density matrices $\rho, \sigma$, the relative entropy $\D(\rho \| \sigma)$ is nonnegative.
\end{proposition}

\begin{proposition}[Strong subadditivity of conditional mutual information, Theorem 11.7.1 of~\cite{wilde2013quantum}]
\label{prop:ssa}
For all tripartite density matrices $\rho^{ABC}$, the conditional mutual information $\I(A:B|C)_\rho$ is nonnegative.
\end{proposition}

\begin{proposition}[Pinsker's inequality, Theorem 11.9.2 of~\cite{wilde2013quantum}]
\label{prop:pinsker}
	For all density matrices $\rho, \sigma$, 
	\[
		 \frac{1}{2 \ln 2} \| \rho - \sigma \|^2_1 \leq \D(\rho \| \sigma)~.
	\]
\end{proposition}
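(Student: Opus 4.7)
The plan is to reduce the quantum inequality to the elementary classical (in fact, binary) Pinsker inequality by passing through a carefully chosen two-outcome measurement, then invoking data processing for relative entropy.

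First I would use the Jordan decomposition of the Hermitian operator $\rho - \sigma$ to write $\rho - \sigma = P - N$ with $P, N \succeq 0$ of mutually orthogonal support, and let $\Pi$ denote the orthogonal projector onto $\supp(P)$. Setting $p := \Tr(\Pi \rho)$ and $q := \Tr(\Pi \sigma)$, the identities $\Tr \rho = \Tr \sigma = 1$ give $p - q = \Tr(\Pi(P - N)) = \Tr(P) = \tfrac{1}{2}\|\rho - \sigma\|_1$. So the two-outcome POVM $\{\Pi, \Id - \Pi\}$ produces Bernoulli distributions $P = (p, 1-p)$ and $Q = (q, 1-q)$ whose total variation distance captures the full trace distance.

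Second, I would invoke the monotonicity of quantum relative entropy under CPTP maps (specifically, under the measurement channel $X \mapsto \Tr(\Pi X)\ketbra{0}{0} + \Tr((\Id - \Pi)X)\ketbra{1}{1}$) to obtain
\[ S(P \,\|\, Q) \;\leq\; S(\rho \,\|\, \sigma). \]
Here $S(P\|Q) = p \log(p/q) + (1-p)\log((1-p)/(1-q))$ is the classical relative entropy.

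Third, I would apply the binary classical Pinsker inequality $2(p-q)^2 \leq S(P\|Q)$. This is an elementary calculus fact: define $f(p) := p\log(p/q) + (1-p)\log((1-p)/(1-q)) - 2(p-q)^2$, check $f(q) = 0$ and $f'(q) = 0$, and verify $f''(p) = \tfrac{1}{p(1-p)} - 4 \geq 0$ on $(0,1)$, so $f \geq 0$ throughout. Combined with the previous step this yields $\tfrac{1}{2}\|\rho - \sigma\|_1^2 = 2(p-q)^2 \leq S(\rho\|\sigma)$, as claimed.

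The main technical ingredient is the data processing inequality for quantum relative entropy, which I would simply cite from a standard reference (e.g., Nielsen and Chuang, or Wilde); everything else in the argument is elementary. There are alternative routes — for instance, via the integral representation of $S(\rho\|\sigma)$ or via operator Jensen-type inequalities — but the reduction to the classical binary case is both the shortest and the most transparent, and it is the route I would take.
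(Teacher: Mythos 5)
Your proof is correct and complete: extracting the optimal two-outcome measurement $\{\Pi,\Id-\Pi\}$ from the Jordan decomposition of $\rho-\sigma$, applying monotonicity of quantum relative entropy under the measurement channel, and finishing with the binary classical Pinsker inequality (verified by the convexity computation you describe) is the standard textbook argument, and each step as you state it is accurate. The paper states Proposition~\ref{prop:pinsker} as a known fact without proof, so there is nothing to compare against; the only cosmetic point is the notation clash of using $P$ both for the positive part of $\rho-\sigma$ and for the Bernoulli distribution $(p,1-p)$, and the harmless fact that your calculus step is in natural logarithms, which only strengthens the claim if $\log$ is read as base $2$.
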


\begin{proposition}[Theorem 11.9.1 in~\cite{wilde2013quantum}]
\label{prop:divergence_data_processing}
Let $\rho^{XY}$ and $\sigma^{XY}$ be quantum states. Then $\D(\rho^{X} \| \sigma^X) \leq \D(\rho^{XY} \| \sigma^{XY})$.
\end{proposition}

\begin{proposition}[Chain rule for relative entropy]
\label{prop:divergence_chain_rule}
	Let $\rho = \sum_x \P(x) \ketbra{x}{x} \otimes \rho_x$ and $\sigma = \sum_x \Qsf(x) \ketbra{x}{x} \otimes \sigma_x$ for some probability distributions $\P$, $\Qsf$. Then $\D(\rho \| \sigma) = \D(\P \| \Qsf) + \Ex_{x \sim \P} \D(\rho_x \| \sigma_x) $, where $\D(\P \| \Qsf) = \sum_x \P(x) \log \frac{\P(x)}{\Qsf(x)}$ denotes the relative entropy between two distributions. In particular, $\D(\rho \| \sigma) \geq \Ex_{x \sim \P} \D(\rho_x \| \sigma_x)$.
\end{proposition}

\begin{proof}
	For all $x$, let $\Pi_x = \mathrm{supp}(\rho_x)$ and $\Gamma_x = \mathrm{supp}(\sigma_x)$. We note that 
	\[
	\log \rho = \sum_x \log \P(x) \ketbra{x}{x} \otimes \Pi_x + \sum_x \ketbra{x}{x} \otimes \log \rho_x~,
	\]
	 and therefore 
	 \[
	 	\rho \log \rho = \sum_x \P(x) \log \P(x) \ketbra{x}{x} \otimes \rho_x + \sum_x \P(x) \ketbra{x}{x} \otimes \rho_x \log \rho_x~.
	\]
	Similarly, 
	\[
		\rho \log \sigma = \sum_x \P(x) \log \Qsf(x) \ketbra{x}{x} \otimes \rho_x \Gamma_x + \sum_x \P(x) \ketbra{x}{x} \otimes \rho_x \log \sigma_x~.
	\] 
	Subtracting, we get
	\begin{align}
		\D(\rho \| \sigma) &= \sum_x \P(x) \Tr \Big( \ketbra{x}{x} \otimes \Big( \log \P(x) \rho_x - \log \Qsf(x) \rho_x \Gamma_x + \rho_x (\log \rho_x - \log \sigma_x) \Big)  \Big ) \notag \\
		&= \sum_x \P(x) \Big( \log \P(x) \, \Tr(\rho_x) - \log \Qsf(x) \, \Tr( \rho_x \Gamma_x ) +  \Tr \Big( \rho_x (\log \rho_x - \log \sigma_x) \Big) \Big) \notag \\
		&= \sum_x \P(x) \Big( \log \P(x)  - \log \Qsf(x) \, \Tr( \rho_x \Gamma_x ) \Big) + \P(x) \, \D(\rho_x \| \sigma_x) \label{eq:divergence_chain_rule}
	\end{align}
	Suppose that for some $x$, $\rho_x \Gamma_x \neq \rho_x$. This implies that the support of $\sigma_x$ does not contain the support of $\rho_x$, and therefore $\D(\rho_x \| \sigma_x) = \infty$. Thus $\D(\P \| \Qsf) + \Ex_{x \sim \P} \D(\rho_x \| \sigma_x) = \infty$ and $\D(\rho \| \sigma) = \infty$ (because each term of the sum in~\eqref{eq:divergence_chain_rule} is nonnegative), establishing equality. Otherwise, $\rho_x \Gamma_x = \rho_x$ for all $x$, and thus~\eqref{eq:divergence_chain_rule} is again equal to $\D(\P \| \Qsf) + \Ex_{x \sim \P} \D(\rho_x \| \sigma_x)$. 
	
	The ``in particular'' statement follows from the fact that $\D(\P \| \Qsf)$ is nonnegative. 
\end{proof}

\begin{proposition}
\label{prop:mutual-information-classical}
Let $\rho^{XE} = \sum_x \P(x) \ketbra{x}{x}^X \otimes \rho_x^E$ denote a classical-quantum state. Then 
\[\I(X:E)_\rho \,=\, \Ex_{x \sim \P} \D(\rho_x^E \| \rho^E)\;.\]
\end{proposition}

\begin{proof}
	By definition of the mutual information, $\I(X : E)_\rho = \D(\rho^{XE} \| \rho^X \otimes \rho^E)$. Letting 
	\[\sigma = \sum_x \P(x) \ketbra{x}{x}^X \otimes \rho^E\]
	we can apply \Cref{prop:divergence_chain_rule} to get
	\[
		\I(X : E)_\rho = \D(\rho^{XE} \| \rho^X \otimes \rho^E) = \Ex_{x \sim \P} \D(\rho_x^E \| \rho^E)~.
	\]
\end{proof}

\begin{proposition}
\label{prop:relative_min_entropy_chain_rule2}
Let $\rho$, $\sigma$, and $\tau$ be density matrices such that $\D(\rho \| \sigma) \leq \lambda_1$ and $\D_\infty(\sigma \| \tau) \leq \lambda_2$. Then $\D_\infty(\rho \| \tau) \leq \lambda_1 + \lambda_2$.
\end{proposition}
\begin{proof}
	$\D_\infty(\sigma \| \tau) = \lambda_2$ implies that $2^{-\lambda_2} \sigma \preceq \tau$. Then,
	\begin{align*}
		\D(\rho \| \tau) &= \Tr(\rho (\log \rho - \log \tau)) \leq \Tr(\rho(\log \rho - \log 2^{-\lambda_2} \sigma)) \\
		&\leq \Tr(\rho(\log \rho - (-\lambda_2)\Id - \log \sigma)) \\
		&\leq \lambda_2 + \Tr(\rho (\log \rho - \log \sigma)) = \lambda_1 + \lambda_2\;.
	\end{align*}
\end{proof}

\begin{proposition}[Quantum Gibbs's Inequality]
\label{prop:divergence_gibbs_inequality}
	For quantum states $\rho^{AB}$, $\sigma^A$, $\tau^B$, 
	\[\D(\rho^{AB} \big \| \sigma^A \otimes \tau^B) \,\geq\, \I(A : B)_\rho\;.\]
\end{proposition}

\begin{proof}
	If the images of $\rho^{A}$ and $\rho^B$ are not contained within the images of $\sigma^A$ and $\tau^B$ respectively then $\D(\rho^{AB} \big \| \sigma^A \otimes \tau^B) = \infty$ and the inequality trivially holds. Otherwise, using the fact that $\log (\sigma \otimes \tau) = (\log \sigma)\otimes \Gamma + \Pi \otimes (\log \tau)$ we have that
	\begin{align*}
	&\D(\rho^{AB} \big \| \sigma^A \otimes \tau^B) - \I(A:B)_\rho \\
	&= \D(\rho^{AB} \big \| \sigma^A \otimes \tau^B) - \D(\rho^{AB} \big \| \rho^A \otimes \rho^B) \\
		&=  \Tr \big( \rho^{AB} \big( (\log \rho^A) \otimes \mathrm{supp}(\rho^B) - (\log \sigma^A) \otimes \mathrm{supp}(\tau^B) \big) \big) \\
		& \qquad \qquad + \Tr \big( \rho^{AB} \big( \mathrm{supp}(\rho^A) \otimes (\log \rho^B)  - \mathrm{supp}(\sigma^A) \otimes (\log \tau^B) \big) \big) \\
		&= \D(\rho^A \| \sigma^A) + \D(\rho^B \| \tau^B) \geq 0\;,
	\end{align*}
	where in the last line we used that $\rho^{AB} (\Id \otimes \mathrm{supp}(\tau^B)) = \rho^{AB} (\mathrm{supp}(\sigma^A) \otimes \Id) = \rho^{AB}$ and that the relative entropy is nonnegative (\Cref{prop:relative_entropy_nonneg}). 
\end{proof}

We prove a quantum analogue of \emph{Raz's Lemma}, a central tool used in many proofs of parallel repetition theorems~\cite{raz1998parallel, Hol09,barak2009strong}. 

\begin{lemma}[Quantum Raz's Lemma]
\label{lem:quantum_raz}
Let $\rho$ and $\sigma$ be two CQ states with  $\rho^{XA}= \rho^{X_1 X_2 \ldots X_n A}$ and $\sigma= \sigma^{XA}= \sigma^{X_1}\otimes \sigma^{X_2}\otimes \ldots \otimes \sigma^{X_n} \otimes \sigma^A$ with $X=X_1 X_2 \ldots X_n$ classical in both states. Then
\begin{equation}\label{eqn:Raz_lemma1} 
\sum_{i=1}^n \,\I(X_i \, :\, A)_\rho \,\leq\, \D(\rho^{XA} \, \| \sigma^{XA})\;. 
\end{equation}
\end{lemma}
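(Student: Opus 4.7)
The plan is to decompose $S(\rho_{XA} \| \sigma_{XA})$ into a sum of non-negative pieces, one of which dominates $\sum_i I(X_i:A)_\rho$. Let $\tilde{\rho}_X := \rho_{X_1} \otimes \cdots \otimes \rho_{X_n}$ denote the product of the single-coordinate marginals of $\rho_X$.

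First, I would establish the identity
\[
S(\rho_{XA} \| \sigma_{XA}) \;=\; I(X:A)_\rho \;+\; S(\rho_X \| \tilde{\rho}_X) \;+\; \sum_{i=1}^n S(\rho_{X_i} \| \sigma_{X_i}) \;+\; S(\rho_A \| \sigma_A).
\]
This is obtained by expanding $S(\rho_{XA} \| \sigma_{XA}) = -S(\rho_{XA}) - \Tr(\rho_{XA} \log \sigma_{XA})$ and using $\log \sigma_{XA} = \sum_i \log \sigma_{X_i} + \log \sigma_A$ (the $\sigma$'s commute, being on different systems), which reduces every trace against $\log \sigma$ to traces against single-coordinate marginals. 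Alternatively, one can apply Lemma~\ref{lem:divergence_chain_rule} once to peel off $A$ from $X$, and iterate the same step on $S(\rho_X \| \otimes_i \sigma_{X_i})$ via the intermediate classical product distribution $\tilde{\rho}_X$. In particular, dropping the non-negative terms $\sum_i S(\rho_{X_i}\|\sigma_{X_i})$ and $S(\rho_A\|\sigma_A)$ gives
\[
S(\rho_{XA} \| \sigma_{XA}) \;\geq\; I(X:A)_\rho + S(\rho_X \| \tilde{\rho}_X).
\]

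Second, I would prove the purely \emph{classical-side} inequality
\[
\sum_{i=1}^n I(X_i:A)_\rho \;\leq\; I(X:A)_\rho + S(\rho_X\|\tilde{\rho}_X).
\]
Since $X = X_1\cdots X_n$ is classical, both usual chain rules for mutual information apply to the CQ state $\rho_{XA}$. From $I(X:A)_\rho = \sum_i I(X_i:A \,|\, X_{<i})_\rho$ and from the two expansions
\[
I(X_i : A X_{<i}) \;=\; I(X_i:X_{<i}) + I(X_i:A\,|\,X_{<i}) \;=\; I(X_i:A) + I(X_i:X_{<i}\,|\,A),
\]
one obtains $I(X_i:A) - I(X_i:A\,|\,X_{<i}) = I(X_i:X_{<i}) - I(X_i:X_{<i}\,|\,A) \leq I(X_i:X_{<i})$. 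Summing over $i$ and using the classical identity $\sum_i I(X_i:X_{<i}) = \sum_i H(X_i) - H(X) = S(\rho_X\|\tilde{\rho}_X)$ yields the claim.

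Combining the two displays produces $\sum_i I(X_i:A)_\rho \leq S(\rho_{XA}\|\sigma_{XA})$, with the improvement over the factor-of-$2$ bound of \cite{chung2015parallel} coming precisely from discarding the non-negative term $\sum_i I(X_i:X_{<i}\,|\,A)_\rho$ rather than bounding it by two separate applications of Pinsker-type identities. The only subtle point, which I view as the main bookkeeping obstacle, is verifying that the chain-rule manipulations in the second step are valid when $A$ is quantum; but this is not an issue because $X$ being classical means every conditional entropy $S(\cdot\,|\,X_S)$ is an honest expectation over classical outcomes of CQ entropies, so the standard chain-rule derivations go through unchanged.
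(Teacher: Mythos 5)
Your proof is correct and follows essentially the same route as the paper's: both exploit the product structure of $\sigma_{XA}$ to reduce $S(\rho_{XA}\|\sigma_{XA})$ to $I(X:A)_\rho + \sum_i I(X_i:X_{<i})_\rho$ plus discarded non-negative relative-entropy terms, and then use the chain rule $I(X_i:AX_{<i}) = I(X_i:X_{<i}) + I(X_i:A|X_{<i})$ together with strong subadditivity to conclude. The only difference is presentational: you package the first step as one exact identity obtained by expanding $\log\sigma_{XA}$, whereas the paper reaches the same lower bound by iterating the relative-entropy chain rule and bounding each conditional term after swapping $\sigma$-marginals for $\rho$-marginals.
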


\begin{proof}
By the chain rule (\Cref{prop:divergence_chain_rule}) we have 
\begin{equation}\label{eqn:Raz_lemma2}
\D(\rho^{XA} \| \sigma^{XA})= \D(\rho^{X_1} \| \sigma^{X_1}) + \Ex_{x_1 \sim \rho^{X_1}} \D(\rho^{X_2}_{X_1=x_1}\| \sigma^{X_2}) + \ldots+ \Ex_{x \sim \rho^{X_1 \cdots X_n}} \D(\rho^{A}_{X=x} \| \sigma^{A})\;,
\end{equation}
where $x_1 \sim \rho^{X_1}$ means sampling $x_1$ according to the classical distribution $\rho^{X_1}$, and similarly for $x \sim \rho^{X_1 \cdots X_n}$. Consider the $i$-th term in~\eqref{eqn:Raz_lemma2}, for $i\in\{1,\ldots,n\}$. We have, via the chain rule followed by Quantum Gibbs's Inequality (\Cref{prop:divergence_gibbs_inequality}), 
\begin{equation}\label{eq:raz-1a}
\Ex_{x_{<i}\sim \rho^{X_1X_2\ldots X_{i-1}}} \D(\rho^{X_i}_{X_{< i}= x_{<i}} \| \sigma^{X_i}) = \D(\rho^{X_1 \cdots X_i} \|\rho^{X_1 \cdots X_{i-1}} \otimes \sigma^{X_i}) \geq  \I(X_1\ldots X_{i-1} : X_i)_\rho\;.
\end{equation}
Now consider the last term in~(\ref{eqn:Raz_lemma2}). Again by the Quantum Gibbs's Inequality we have
\begin{align}
\Ex_{x\sim \rho^X} \D(\rho^{A}_{X=x} \|\sigma^{A}) &= \D(\rho^{XA} \| \rho^X \otimes \sigma^A) \geq \I(X:A)_\rho = \sum_{i=1}^n \I(X_i : A| X_1 X_2 \ldots X_{i-1})_\rho\;,\label{eq:raz-1b}
\end{align}
where the last equality is by definition of the conditional mutual information. 
Summing~\eqref{eq:raz-1a} for $i\in\{1,\ldots,n\}$ and~\eqref{eq:raz-1b} and using $\I(X_i:A X_1\ldots X_i)= \I(X_i:X_1\ldots X_{i-1})+ \I(X_i:A|X_1\ldots X_{i-1})$ gives, from~\eqref{eqn:Raz_lemma2},
\[ \D(\rho^{X A} \| \sigma^{XA})\geq \sum_{i=1}^n \I(X_i: A X_1 \ldots X_{i-1})_\rho \geq \sum_{i=1}^n \I(X_i:A)_\rho\;,
\]
where the last inequality follows by taking the difference and using strong subadditivity of conditional mutual information (\Cref{prop:ssa}). 
\end{proof}

Define the \emph{fidelity} between two density matrices $\rho$ and $\sigma$ as 
\[\F(\rho,\sigma) = \| \sqrt{\rho} \sqrt{\sigma} \|_1\;.\]
The Fuchs-van de Graaf inequalities relate fidelity and trace norm as
\begin{equation}\label{eq:fuchs-graaf}
1 - \F(\rho,\sigma) \leq \frac{1}{2} \| \rho - \sigma \|_1 \leq \sqrt{1 - \F(\rho,\sigma)^2}\;.
\end{equation}
The following relates the fidelity between two density matrices with the inner product between their purifications. 

\begin{theorem}[Uhlmann's Theorem, Theorem 9.2.1 in~\cite{wilde2013quantum}]
\label{thm:uhlmann}
	Let $\ket{\psi}^{AB},\ket{\phi}^{AB}$ be bipartite states, and let $\rho^{A}$ and $\sigma^{A}$ denote their reduced density matrices on the system $A$, respectively. Then there exists a unitary map $V$ acting on $B$ such that
	\[
		\bra{\phi} (\Id \otimes V) \ket{\psi} = \F(\rho,\sigma)~.
	\]
\end{theorem}

\subsection{First lemma}
\label{sec:l1}

The first lemma is the following. 

\begin{lemma}
\label{lem:unitary_bounds}
For all $i$, $\br_\mi$, $x$ and $y$ there exists unitaries $U_{\br_\mi, x}$ acting on $E_A$ and unitaries $V_{\br_\mi, y}$, $V_{\ssxy}$ acting on $E_B$ such that with probability at least $1 - O(\delta^{1/16})$ over the choice of a uniformly random $i \in [n] \setminus C$, 
\begin{align}
	\Ex_{\bR_\mi | W_C} \,\, \Ex_X \, \, \, \big \|(U_{\br_\mi, x} \otimes \Id)  \ket{\wt{\Phi}_\sspp} - \ket{\wt{\Phi}_\ssxp }   \big \| &=  O(\delta^{1/16}/\alpha^{5/4})\;,\label{eq:ux_bound}\\
	\Ex_{\bR_\mi | W_C} \,\, \Ex_Y \, \, \, \big \| (\Id \otimes V_{\br_\mi,y})  \ket{\wt{\Phi}_\sspp}  - \ket{\wt{\Phi}_\sspy }  \big \| &=  O(\delta^{1/16}/\alpha^{5/4})\;,\label{eq:vy_bound}\\
	\Ex_{\bR_\mi | W_C} \,\, \Ex_{XY} \,\, \, \big \|  (\Id \otimes V_{\ssxy})  \ket{\wt{\Phi}_\sspxy}  - \ket{\wt{\Phi}_\sspxp }  \big \| &=  O(\delta^{1/16}/\alpha^{5/4})\;.\label{eq:vxy_bound}
		\end{align}
		where $\Ex_X$, $\Ex_Y$, and $\Ex_{XY}$ denote expectations under $\mu_X(x)$, $\mu_Y(y)$, and $\mu_{XY}(x,y)$ respectively.
\end{lemma}

The remainder of this subsection is devoted to the proof of Lemma~\ref{lem:unitary_bounds}. 
Recall that the strategy $\strategy^n$ consists of the entangled state $\ket{\psi} \in \C^d_{E_A} \otimes \C^d_{E_B}$ and POVMs $\{A_\bx(\ba) \}$ acting on system $E_A$ and $\{B_\by(\bb) \}$ acting on system $E_B$. For all $\bomega$, $\ba_C$, and $\bb_C$ define
\begin{equation}
\label{eq:a-bomega-def}
	A_\bomega(\ba_C) = \Ex_{\bX | \bOmega = \bomega} \, A_\bx(\ba_C) \qquad \text{and} \qquad B_\bomega(\bb_C) = \Ex_{\bY | \bOmega = \bomega} \, B_\by(\bb_C)
\end{equation}
where $A_\bx(\ba_C),B_\by(\bb_C)$ are defined in~\eqref{eq:states-operators-1}. 
We let $\rho$ denote the reduced density matrix of $\ket{\psi}$ on either system (this is well-defined because we assumed the form~\eqref{eq:psi-sym} for $\ket{\psi}$). 

Introduce the notation $\psi = \ket{\psi}\bra{\psi}$ and $X[\rho]=X\rho X^\dagger$. For a classical random variable, such as $\ba_C$, we write $\puretomixed{\ba_C}$ to denote the rank-$1$ density matrix $\ketbra{\ba_C}{\ba_C}$. Let $\ac$ denote pair of the random variables $(\bA_C, \bB_C)$ so that $\bR_\mi = (\bOmega_\mi,\ac)$. Define the following density operators.
\begin{gather}
\Xi^{\bOmega \bY E_AE_B\ac} = \sum_{\bomega, \by, \ba_C, \bb_C} \P_{\bOmega \bY} (\bomega, \by) \, \puretomixed{\bomega, \by}   \otimes \left (\sqrt{A_{\bomega}(\ba_C)} \otimes \sqrt{B_{\by}(\bb_C)} \right ) \left [\psi \right ] \otimes \puretomixed{\ba_C, \bb_C}~, \label{eq:Xi-def} \\
\Lambda^{\bOmega \bX E_AE_B\ac} = \sum_{\bomega, \bx, \ba_C, \bb_C} \P_{\bOmega \bX} (\bomega, \bx) \, \puretomixed{\bomega, \bx}   \otimes \left (\sqrt{A_{\bx}(\ba_C)} \otimes \sqrt{B_{\bomega}(\bb_C)} \right ) \left [\psi \right ] \otimes \puretomixed{\ba_C, \bb_C}~. \label{eq:Lambda-def}
\end{gather}
Both states are classical on the registers $\bOmega$, $\bX$, $\bY$, and $\ac$ and quantum on registers $E_A$ and $E_B$. The state $\Xi$ is defined such that when tracing out the entanglement registers $E_A$ and $E_B$ the resulting state $\Xi^{\bOmega \bY \bA_C \bB_C}$ is a classical state representing the probability distribution $\P_{\bOmega \bY \bA_C \bB_C}$. To see this, observe that
\begin{align*}
\Xi^{\bOmega \bY \ac} &= \sum_{\bomega, \by, \ba_C, \bb_C} \P_{\bOmega \bY}(\bomega, \by) \, \puretomixed{\bomega, \by, \ba_C, \bb_C} \, \bra{\psi} A_\bomega(\ba_C) \otimes B_\by(\bb_C) \ket{\psi} \\
&= \sum_{\bomega, \by, \ba_C, \bb_C} \P_{\bOmega \bY}(\bomega, \by) \, \puretomixed{\bomega, \by, \ba_C, \bb_C} \, \bra{\psi} \Ex_{\bX | \bOmega = \bomega} \, A_\bx(\ba_C)  \otimes B_\by(\bb_C) \ket{\psi} \\
&= \sum_{\bomega, \bx, \by, \ba_C, \bb_C} \P_{\bOmega \bX \bY}(\bomega, \bx, \by) \, \puretomixed{\bomega, \by, \ba_C, \bb_C} \cdot \P_{\bA_C \bB_C | \bX = \bx, \bY = \by}(\ba_C,\bb_C) \\
&= \sum_{\bomega,\by, \ba_C, \bb_C} \P_{\bOmega\bY \bA_C \bB_C}(\bomega, \by, \ba_C,\bb_C) \, \puretomixed{\bomega, \by, \ba_C, \bb_C}\;,
\end{align*}
where in the third line we used Item 1 from Claim~\ref{claim:ufacts} and in the fourth line we used Item 2 from Claim~\ref{claim:ufacts}. Similarly, the state $\Lambda^{\bOmega \bX \bA_C \bB_C}$ represents the probability distribution  $\P_{\bOmega \bX \bA_C \bB_C}$. In particular, we obtained that $\Xi^{\bOmega \bY E_AE_B\ac} $ and $\Lambda^{\bOmega \bY E_AE_B\ac} $ are properly normalized density matrices. 

Since the event $W_C$ is determined by the random variables $(\bOmega,\bA_C,\bB_C)$ we can {condition} the states $\Xi, \Lambda$ on the event $W_C$ to obtain states
\begin{gather*}
	\xi^{\bOmega \bY E_A E_B \ac} = \frac{1}{\P(W_C)} \sum_{\substack{\bomega, \by, \ba_C, \bb_C: \\ (\bomega,\ba_C,\bb_C) \in W_C}} \P_{\bOmega \bY}(\bomega, \by) \, \puretomixed{\bomega, \by} \otimes \left (\sqrt{A_{\bomega}(\ba_C)} \otimes \sqrt{B_{\by}(\bb_C)} \right ) \left [\psi \right ] \otimes \puretomixed{\ba_C,\bb_C}~, \\
	\lambda^{\bOmega \bX E_A E_B \ac} = \frac{1}{\P(W_C)} \sum_{\substack{\bomega, \bx, \ba_C, \bb_C: \\ (\bomega,\ba_C,\bb_C) \in W_C}} \P_{\bOmega \bX}(\bomega, \bx) \, \puretomixed{\bomega, \bx} \otimes \left (\sqrt{A_{\bx}(\ba_C)} \otimes \sqrt{B_{\bomega}(\bb_C)} \right ) \left [\psi \right ] \otimes \puretomixed{\ba_C,\bb_C}\;.
\end{gather*}
We can further condition $\xi$ and $\Xi$ on specific settings of the classical variables $(\bOmega,\bY,\bA_C,\bB_C)$. For example, for any $\br = (\bomega,\ba_C,\bb_C)$ we write $\xi_\br$ and $\Xi_\br$ to denote $\xi$ and $\Xi$ conditioned on $\bOmega = \bomega, \bA_C = \ba_C,\bB_C = \bb_C$, respectively. As another example, for all $i \in [n] \setminus C$, $\br_\mi = (\bomega_\mi,\ba_C,\bb_C) \in W_C$, $x \in \X$ and $y \in \Y$, we write $\xi_{\br_\mi,x,y}$ and $\Xi_{\br_\mi,x,y}$ to denote $\xi$ and $\Xi$ respectively conditioned on $\bOmega = \bomega$ (where $\bomega = (\bomega_\mi,\bomega_i)$ with $\bomega_i = (A,x)$),  $\bA_C = \ba_C, \bB_C = \bb_C$, and $\bY_i = y$.

Similarly, let $\lambda_{\br_\mi,x,y}$ and $\Lambda_{\br_\mi,x,y}$ denote $\lambda$ and $\Lambda$ respectively conditioned on $\bOmega = \bomega$ (where $\bomega = (\bomega_\mi,\bomega_i)$ with $\bomega_i = (B,y)$), $\bA_C = \ba_C, \bB_C = \bb_C$, and $\bX_i = x$.

The main step of the proof is given by the following two claims which build on top of each other. 

\begin{claim}\label{claim:xi-change-x}
The following hold.
\begin{gather}
\Ex_I \,  \Ex_{\bR | W_C}\, \I \big (\bY_i ; E_A\big )_{\xi_\br} \,= \, O(\delta)~, \label{eq:xi-change-1}\\
\Ex_I \,  \Ex_{\bR | W_C} \,\I \big (\bX_i ; E_B\big )_{\lambda_\br} \,= \, O(\delta)~. \label{eq:xi-change-2}
\end{gather}
\end{claim}

\begin{proof}
We present the proof for~\eqref{eq:xi-change-1}; the proof for~\eqref{eq:xi-change-2} is similar. 
First we observe that by definition it holds that $\P(W_C) \, \xi\,\, \leq \,\Xi$. Using the definition of relative entropy and relative min-entropy this inequality implies
\[ \D(\xi \| \Xi)\,\leq\, \D_\infty(\xi \| \Xi)\, \leq\, \log \frac{1}{\P(W_C)}\;.\]
  Using the chain rule for the relative entropy (\Cref{prop:divergence_chain_rule}) and the fact that tracing out registers can only decrease the relative entropy (\Cref{prop:divergence_data_processing}), 
\begin{equation}\label{eq:claim23-1}
	\Ex_{\bR | W_C} \D \big (\xi_{\br}^{\bY E_A} \,\, \| \,\, \Xi_{\br}^{\bY E_A} \big ) \leq \log \frac{1}{\P(W_C)}\;.
\end{equation}
For a state of the form $\ket{\psi} = \sum \sqrt{\lambda_j} \ket{v_j}\ket{v_j} \in \C^d \otimes \C^d$ and for all linear operators $X,Y$ acting on $\C^d$ a simple calculation shows that
\[
	\Tr_{E_B} \big( (X \otimes Y)\ketbra{\psi}{\psi}(X \otimes Y)^\dagger \big) \,=\, X \, \sqrt{\rho} \, \overline{Y^\dagger Y} \, \sqrt{\rho} \, X^\dagger\;,
\]
where $\rho = \sum \lambda_j \ket{v_j}\bra{v_j}$ and the complex conjugate is taken with respect to the orthonormal basis $\{\ket{v_j}\}$.  
	Using this, for all $\bomega$
	\begin{align}
		\Xi_\bomega^{\bY E_A \ac} &= \sum_{\by, \ba_C, \bb_C} \P_{\bY | \bomega}(\by) \,\, \puretomixed{\by} \otimes \sqrt{A_\bomega(\ba_C)} \,\, \sqrt{\rho} \,\, \overline{B}_{\by}(\bb_C) \,\, \sqrt{\rho} \,\, \sqrt{A_\bomega(\ba_C)}  \otimes \puretomixed{\ba_C \bb_C}\notag \\
		&\leq  \sum_{\by, \ba_C, \bb_C} \P_{\bY | \bomega}(\by) \,\, \puretomixed{\by} \otimes \sqrt{A_\bomega(\ba_C)} \,\, \sqrt{\rho} \,\, \overline{B}_{\by}(\bb_C) \,\, \sqrt{\rho} \,\, \sqrt{A_\bomega(\ba_C)} \otimes \Id \notag\\
		&= \sum_{\by, \ba_C} \P_{\bY | \bomega}(\by) \,\, \puretomixed{\by} \otimes \sqrt{A_\bomega(\ba_C)} \,\,\rho\,\, \sqrt{A_\bomega(\ba_C)} \otimes \Id \notag \\
		&= \Xi_\bomega^{\bY} \otimes \Xi_\bomega^{E_A} \otimes \Id \;,\label{eq:claim23-2}
	\end{align}
where the second-to-last equality uses $\sum_{\bb_C} \overline{B}_{\by}(\bb_C) = \Id$ and in the last equality we used that the reduced density matrices
\[\Xi_\bomega^{\bY}\,=\,\sum_{\by} \P_{\bY | \bomega}(\by) \,\, \puretomixed{\by}\qquad\text{and}\qquad \Xi_\bomega^{E_A}\,=\,\sum_{\ba_C} \sqrt{A_\bomega(\ba_C)} \,\,\rho\,\, \sqrt{A_\bomega(\ba_C)}\;.\]
Taking the partial trace on both sides of~\eqref{eq:claim23-2} we get
\begin{align}
\label{eq:local-unitaries-2}
\Xi_\bomega^{\bY E_A} \,\leq\, (|\A|\cdot |\B|)^{|C|} \,\, \Xi_{\bomega}^{\bY} \otimes \Xi_{\bomega}^{E_A}\;.
\end{align}
	From~\eqref{eq:local-unitaries-2} and the definition of $\D_\infty$ it follows that 
	\begin{equation}\label{eq:claim23-2b}
	\D_\infty \big (\Xi_{\bomega}^{\bY E_A} \big \| \Xi_{\bomega}^{\bY} \otimes \Xi_{\bomega}^{E_A} \big ) \leq |C| \cdot \log |\A| |\B|\;.
	\end{equation}
Applying Lemma~\ref{lem:quantum_raz},
\begin{align*}
		\Ex_I \,  \Ex_{\bR | W_C} \I \big (\bY_i ; E_A\big )_{\xi_\br} &\leq 		\frac{1}{m} \Ex_{\bR | W_C} \D \big (\xi_{\br}^{\bY E_A} \big \| \Xi_{\br}^{\bY} \otimes \Xi_{\br}^{E_A} \big )\notag\\
		&\leq \frac{1}{m} \Ex_{\bOmega | W_C} \D \big (\xi_{\bomega}^{\bY E_A} \big \| \Xi_{\bomega}^{\bY} \otimes \Xi_{\bomega}^{E_A} \big )\notag \\
		&\leq 		\frac{1}{m} \Big(\Ex_{\bOmega | W_C} \D \big (\xi_{\bomega}^{\bY E_A} \big \| \Xi_{\bomega}^{\bY E_A} \big) + \Ex_{\bOmega | W_C} \D_\infty \big (\Xi_{\bomega}^{\bY E_A} \big \| \Xi_{\bomega}^{\bY} \otimes \Xi_{\bomega}^{E_A} \big) \, \Big)\notag\\
		&\leq \frac{1}{m}\Big(\log \frac{1}{\P(W)} +  |C| \cdot \log |\A| |\B|\Big) = \delta\;,
\end{align*}
where the second inequality follows from \Cref{prop:divergence_chain_rule}, the third from \Cref{prop:relative_min_entropy_chain_rule2} and the last uses~\eqref{eq:claim23-1} and~\eqref{eq:claim23-2b} as well as the definition of $\delta$ in~\eqref{eq:delta}. This establishes~\eqref{eq:xi-change-1} of the Claim. The proof for~\eqref{eq:xi-change-2} is similar.
\end{proof}

The second claim relates the reduced densities on $E_A$ (resp. $E_B$) of the states $\xi$ (resp. $\lambda$) associated with different choices of $i,\br_\mi,x,y$. 

\begin{claim}\label{claim:xi-change-y}
The following hold:
\begin{gather}
\Ex_I \, \Ex_{\bR_\mi |  W_C} \, \, \, \Ex_{XY} \,\, \big \| \xi^{E_A}_\ssxy - \xi^{E_A}_\ssxp \big \|_1^2\; = \; O\big(\sqrt{\delta}/\alpha^4 \big)~, \label{eq:E_A_1} \\
\Ex_I \, \Ex_{\bR_\mi |  W_C} \, \, \, \Ex_{XY} \,\, \big \| \lambda^{E_B}_\ssxy - \lambda^{E_B}_\sspy \big \|_1^2\; = \; O\big(\sqrt{\delta}/\alpha^4 \big)~, \label{eq:E_A_2}
\end{gather}
where the expectation over $XY$ is with respect to the distribution $\mu_{XY}$. 
\end{claim}

\begin{proof}
We show~\eqref{eq:E_A_1}; the proof of~\eqref{eq:E_A_2} is similar. Consider, for all $i$ and $\br$ that implies the event $W_C$, the state
\[
	\xi_{\br}^{\bY_i E_A} = \Ex_{\bY_i | \bR = \br, W_C} \, \puretomixed{\by_i}^{\bY_i} \otimes \xi_{\br,\by_i}^{E_A}
\]
and note that it is classical in the register $\bY_i$.
Applying Pinsker's inequality (Lemma~\ref{prop:pinsker}) and then using \Cref{prop:mutual-information-classical} we get
\begin{align}
\Ex_I \, \Ex_{\bR \bY_i | W_C} \,\, \big \| \xi^{E_A}_{\br,\by_i} -  \xi^{E_A}_{\br} \big \|_1^2 &\leq 2  \ln 2 \, \Ex_I \,\Ex_{\bR \bY_i | W_C} \,\,  \D \big(  \xi^{E_A}_{\br,\by_i} \, \big \|\, \xi^{E_A}_{\br} \big )\notag\\
&= 2\ln 2\, \Ex_I \, \Ex_{\bR | W_C} \,\,  \I (\bY_i ; E_A)_{\xi_\br} \notag\\
& = O(\delta)~,\label{eq:claim24-1a}
\end{align} 
where the last line is by~\eqref{eq:xi-change-1}. Using \Cref{lem:trivial} to insert a copy of $\P_{\bX_i \bY_i}$ on both sides of the difference of Item 3 of \Cref{lem:classical_skew} yields
\[
\Ex_I \left \|  \P_{\bX_i \bY_i} \P_{\bOmega_i | W_C} \P_{\bR_\mi | \bX_i = \dummy, \bY_i =\dummy,  W_C} - \P_{\bX_i \bY_i} \P_{\bOmega_i | W_C} \P_{\bR_\mi | \bOmega_i W_C} \right \| \leq O(\sqrt{\delta}/\alpha^2)~.
\]
Inserting $\P_{\Omega_i | W_C}$ into both sides of the difference of Item 4 of \Cref{lem:classical_skew} yields
\[
\Ex_I \left \|  \P_{\bX_i \bY_i} \P_{\bOmega_i | W_C} \P_{\bR_\mi | \bX_i = \dummy, \bY_i =\dummy,  W_C} - \P_{\bX_i \bY_i} \P_{\bOmega_i | W_C} \P_{\bR_\mi | \bX_i \bY_i  W_C} \right \| \leq O(\sqrt{\delta}/\alpha^2)~.
\]
Using the triangle inequality with the previous two bounds yields
\begin{equation}
\label{eq:xi-change-y-1b}
\Ex_I \left \|  \P_{\bX_i \bY_i} \P_{\bOmega_i | W_C} \P_{\bR_\mi | \bOmega_i  W_C} - \P_{\bX_i \bY_i}  \P_{\bOmega_i | W_C} \P_{\bR_\mi | \bX_i \bY_i  W_C} \right \| \leq O(\sqrt{\delta}/\alpha^2)~.
\end{equation}
Marginalizing over $\bOmega_i$ in Item 1 of \Cref{lem:classical_skew} we get $\Ex_I \left \| \P_{\bX_i \bY_i} - \P_{\bX_i \bY_i | W_C} \right \| \leq \sqrt{\delta}$, so we can replace the second instance of $\P_{\bX_i \bY_i}$ in~\eqref{eq:xi-change-y-1b} with $\P_{\bX_i \bY_i | W_C}$ (incurring an error of $\sqrt{\delta}$) to get
\[
\Ex_I \left \|  \P_{\bX_i \bY_i} \P_{\bOmega_i | W_C} \P_{\bR_\mi | \bOmega_i  W_C} - \P_{\bX_i \bY_i | W_C}  \P_{\bOmega_i | W_C} \P_{\bR_\mi | \bX_i \bY_i  W_C} \right \| \leq O(\sqrt{\delta}/\alpha^2)~.
\]
Marginalizing over $\bX_i \bY_i$ on both side yields
\begin{equation}\label{eq:claim24-1}
\Ex_I \left \| \P_{\bOmega_i | W_C} \P_{\bR_\mi | \bOmega_i  W_C} -  \P_{\bOmega_i | W_C} \P_{\bR_\mi | W_C} \right \| \leq O(\sqrt{\delta}/\alpha^2)~.
\end{equation}
We insert a copy of $\P_{\bY_i | \bOmega_i }$ in~\eqref{eq:claim24-1} to obtain via \Cref{lem:trivial}
\begin{equation}\label{eq:claim24-2}
\Ex_I \left \| \P_{\bOmega_i | W_C}\P_{\bR_\mi | \bOmega_i  W_C} \P_{\bY_i | \bOmega_i }  - \P_{\bOmega_i | W_C} \P_{\bR_\mi | W_C}  \P_{\bY_i | \bOmega_i } \right \| \leq O(\sqrt{\delta}/\alpha^2)~.
\end{equation}
Marginalizing over $\bX_i$ in Item 2 of~\Cref{lem:classical_skew} we obtain
\begin{equation}\label{eq:claim24-3}
\Ex_{I} \big\|\P_{\bR \bY_i | W_C} - \P_{\bR | W_C} \P_{\bY_i | \bOmega_i }\big\| \,\leq\,\sqrt{\delta}\;.
\end{equation}
Using $\bR = (\bR_\mi,\bOmega_i)$ it follows that $\P_{\bR | W_C}=\P_{\bOmega_i | W_C} \P_{\bR_\mi | \bOmega_i  W_C}$. Thus~\eqref{eq:claim24-3} implies that we can replace the first instance of $\P_{\bOmega_i | W_C} \P_{\bR_\mi | \bOmega_i  W_C}$ in~\eqref{eq:claim24-2} with $\P_{\bR \bY_i | W_C}$ (incurring an error of $\sqrt{\delta}$) to get
\begin{equation}\label{eq:claim24-4}
\Ex_I \left \| \P_{\bR \bY_i | W_C} -\P_{\bOmega_i | W_C} \P_{\bR_\mi | W_C}  \P_{\bY_i | \bOmega_i }  \right \| \leq O(\sqrt{\delta}/\alpha^2)~.
\end{equation}
Combining~\eqref{eq:claim24-4} with~\eqref{eq:claim24-1a} gives
\begin{equation}\label{eq:claim24-4c}
	\Ex_I \, \Ex_{\bOmega_i | W_C} \, \Ex_{\bR_\mi | W_C} \, \Ex_{\bY_i | \bOmega_i} \,\, \big \| \xi^{E_A}_{\br,\by_i} -  \xi^{E_A}_{\br} \big \|_1^2 \leq O(\sqrt{\delta}/\alpha^2)\;.
\end{equation}
Using the fact that $\Ex_I \| \P_{\bOmega_i | W_C} - \P_{\bOmega_i} \| \leq \sqrt{\delta}$, which follows from Item 1 of \Cref{lem:classical_skew}, recalling that $\bOmega_i = (\bD_i,\bM_i)$, conditioning on $\bD_i = A$ (which occurs with probability $1/2$) from~\eqref{eq:claim24-4c} we get
\begin{equation}\label{eq:claim24-4d}
	\Ex_I \, \Ex_{\bM_i | \bD_i =A} \, \Ex_{\bR_\mi | W_C} \, \Ex_{\bY_i | \bOmega_i=(A,\bM_i)} \,\, \big \| \xi^{E_A}_{\br,\by_i} -  \xi^{E_A}_{\br} \big \|_1^2 \leq O(\sqrt{\delta}/\alpha^2)\;.
\end{equation}
Next our goal is to exchange the expectations over $\bM_i | \bD_i = A$ and $\bY_i | \bOmega_i = (A,\bM_i)$ in~\eqref{eq:claim24-4d} with an expectation over $XY$. From the definition of the variables $(D,M,X,Y)$ (see \Cref{sec:p_setup}) we see that
\[\forall x\in \X\;,\qquad \P_X(x) \,\leq\, \frac{1 - \eta}{\alpha - \eta} \, \P_{M|D=A}(x)\;.\] 
Therefore for all $x \in \X$,
\begin{equation}
\label{eq:xi-change-y-1}
	\P_{XY}(x,y) = \P_{X}(x) \cdot \P_{Y|X=x}(y) \leq \frac{2}{\alpha} \,\P_{\bM_i | \bD_i = A}(x) \cdot \P_{\bY_i | \bOmega_i = (A,x)}(y)\;,
\end{equation}
where we used $\P_{\bY_i | \bOmega_i = (A,x)}(y) = \P_{Y|X=x}(y)$ and $\eta = \alpha/2$. Furthermore observe that when $\br = (\br_\mi,\bomega_i)$ with $\bomega_i = (A,x)$ then by definition
\begin{equation}
\label{eq:xi-change-y-2}
	\xi_{\br,y} = \xi_{\br_\mi,x,y} \qquad \text{and} \qquad \xi_{\br} = \xi_{\br_\mi,x}~.
\end{equation}
Thus
\begin{align}
	\Ex_I \, \Ex_{XY} \, \Ex_{\bR_\mi | W_C} \, \,\, \big \| \xi^{E_A}_{\br_\mi,x,y} -  \xi^{E_A}_{\br_\mi,x} \big \|_1^2 &\leq \frac{2}{\alpha} \, \Ex_I \, \Ex_{\bM_i \bY_i | \bD_i =A} \, \Ex_{\bR_\mi | W_C} \,\, \big \| \xi^{E_A}_{\br,\by_i} -  \xi^{E_A}_{\br} \big \|_1^2 \notag\\
	&\leq O(\sqrt{\delta}/\alpha^3)\;,\label{eq:xi-change-y-3}
\end{align}
where the first inequality is by~\eqref{eq:xi-change-y-1} and uses~\eqref{eq:xi-change-y-2} and the second is by~\eqref{eq:claim24-4d}. 
Conditioning on $Y = \dummy$, which occurs with probability at least $\alpha$ under the distribution $\mu_Y$, yields
\begin{equation}
\label{eq:xi-change-y-4}
	\Ex_I \, \Ex_{X} \, \Ex_{\bR_\mi | W_C} \, \,\, \big \| \xi^{E_A}_{\br_\mi,x,\dummy} -  \xi^{E_A}_{\br_\mi,x} \big \|_1^2 \leq O(\sqrt{\delta}/\alpha^4)~.
\end{equation}
Using the triangle inequality and $(a+b)^2 \leq 2(a^2+b^2)$ we get 
\begin{align*}
\Ex_I \, \Ex_{XY} \, \Ex_{\bR_\mi | W_C} \, \,\, \big \| \xi^{E_A}_{\br_\mi,x,\dummy} -  \xi^{E_A}_{\br_\mi,x,y} \big \|_1^2 &\leq \Ex_I \, \Ex_{XY} \, \Ex_{\bR_\mi | W_C} \, \,\, 2 \, \big \| \xi^{E_A}_{\br_\mi,x,\dummy} -  \xi^{E_A}_{\br_\mi,x} \big \|_1^2 + 2 \,\, \big \| \xi^{E_A}_{\br_\mi,x} -  \xi^{E_A}_{\br_\mi,x,y} \big \|_1^2 \\
& \leq  O(\sqrt{\delta}/\alpha^4)\;,
\end{align*}
where the second inequality uses~\eqref{eq:xi-change-y-3} and~\eqref{eq:xi-change-y-4}. This concludes the proof of~\eqref{eq:E_A_1} in \Cref{claim:xi-change-y}. The proof of~\eqref{eq:E_A_2} is similar.
\end{proof}

We are now ready to give the proof of Lemma~\ref{lem:unitary_bounds}.

\begin{proof}[Proof of Lemma~\ref{lem:unitary_bounds}]
We start by showing the existence of unitaries $V_{\br_\mi,y}$ that satisfy~\eqref{eq:vy_bound}. 
Let $\br_\mi = (\bomega_\mi,\ba_C,\bb_C)$ be such that it implies the event $W_C$, and let $x \in \X$, $y \in \Y$. 

\begin{claim}
\label{clm:xi-purification}
The state $\ket{\wt{\Phi}_\sspxy}^{E_A E_B}$ defined in~\eqref{eq:def-psitt} is a purification of the state $\xi_\ssxy^{E_A}$. Furthermore, the state $\ket{\wt{\Phi}_\sspy}^{E_A E_B}$ is a purification of the state $\xi_\sspy^{E_A}$. 
\end{claim}

\begin{proof}
Let $\bomega = (\bomega_\mi,\bomega_i)$ where $\bomega_i = (A,x)$. We can write $\xi_\ssxy$ explicitly as
\begin{align*}
	\xi_\ssxy = \frac{1}{\P_{\bOmega \bY_i \bA_C \bB_C}(\bomega,y,\ba_C,\bb_C)} \sum_{\by | \by_i = y} \P_{\bOmega \bY}(\bomega, \by) \, \puretomixed{\bomega, \by} \otimes \left (\sqrt{A_{\bomega}(\ba_C)} \otimes \sqrt{B_{\by}(\bb_C)} \right ) \left [\psi \right ] \otimes \puretomixed{\ba_C,\bb_C}\;,
\end{align*}
where $\br = (\br_\mi,\bomega_i)$. To see that the normalization is correct, the trace of the right-hand side (without the normalization) is
\begin{align*}
	\sum_{\by | \by_i = y} \P_{\bOmega \bY}(\bomega, \by) \, \bra{\psi} A_\bomega(\ba_C) \otimes B_\by(\bb_C) \ket{\psi} &= \sum_{\bx,\by | \by_i = y} \P_{\bOmega \bX \bY \bA_C \bB_C}(\bomega, \bx, \by, \ba_C, \bb_C) \\
	&= \P_{\bOmega \bY_i \bA_C \bB_C}(\bomega,y,\ba_C,\bb_C)~.
\end{align*}
Taking the partial trace of $\xi_\ssxy$ on the register $E_A$ we get
\begin{align}
\xi_\ssxy^{E_A} = &\frac{1}{\P_{\bOmega \bY_i \bA_C \bB_C}(\bomega,y,\ba_C,\bb_C)} \sum_{\by | \by_i = y} \P_{\bOmega \bY}(\bomega, \by) \, \sqrt{A_{\bomega}(\ba_C)} \,\, \sqrt{\rho} \,\, \overline{B}_\by(\bb_C) \,\, \sqrt{\rho} \,\, \sqrt{A_\bomega(\ba_C)}  \notag \\
&= \frac{\P_{\bOmega \bY_i}(\bomega, y)}{\P_{\bOmega \bY_i \bA_C \bB_C}(\bomega,y,\ba_C,\bb_C)} \, \sqrt{A_{\bomega}(\ba_C)} \,\, \sqrt{\rho} \,\, \Big ( \sum_{\by_\mi } \P_{\bY_\mi | \bomega_\mi}(\by_\mi) \overline{B}_\by(\bb_C) \Big) \,\, \sqrt{\rho} \,\, \sqrt{A_\bomega(\ba_C)} \notag \\
&= \frac{1}{\P_{\bA_C \bB_C | \bOmega = \bomega, \by_i = y}(\ba_C,\bb_C)} \, \sqrt{A_{\bomega}(\ba_C)} \,\, \sqrt{\rho} \,\,  \overline{B}_{\bomega_\mi, y} (\bb_C) \,\, \sqrt{\rho} \,\, \sqrt{A_\bomega(\ba_C)} \notag \\
&= \gamma_\sspxy^{-2} \, \sqrt{A_{\bomega}(\ba_C)} \,\, \sqrt{\rho} \,\,  \overline{B}_{\bomega_\mi, y} (\bb_C) \,\, \sqrt{\rho} \,\, \sqrt{A_\bomega(\ba_C)}
\;,\label{eq:xi-purification-1}
\end{align}
where the operator $B_{\bomega_\mi, y} (\bb_C)$ is defined in \Cref{sec:states-operators} and the last line follows from~\eqref{eq:gamma-def-3} in \Cref{prop:gamma}. Notice that since $\bomega_i = (A,x)$, the operator $A_\bomega(\ba_C)$ is equal to $A_{\bomega_\mi,\dummyx}(\ba_C)$ defined in~\eqref{eq:states-operators-3}. This implies that the state in~\eqref{eq:xi-purification-1} is also the reduced density matrix of $\ket{\wt{\Phi}_\sspxy}^{E_A E_B}$ (defined in~\eqref{eq:def-psitt}) on register $E_A$. 

The ``Furthermore'' part of the Claim follows from the fact that when $x = \dummy$, the operator $A_\bomega(\ba_C)$ for $\omega_i = (A,x)$ is equal to $A_{\bomega_\mi,\dummy}(\ba_C)$, defined in~\eqref{eq:states-operators-2}. Thus $\ket{\wt{\Phi}_\sspxy}^{E_A E_B} = \ket{\wt{\Phi}_\sspy}^{E_A E_B}$, which concludes the proof.
\end{proof}

\Cref{clm:xi-purification} implies that the states $\ket{\wt{\Phi}_\sspy}$ and $\ket{\wt{\Phi}_\sspp}$ purify $\xi^{E_A}_\sspy$ and $\xi^{E_A}_\sspp$ respectively. By Uhlmann's Theorem (\Cref{thm:uhlmann}) there exists a unitary $V_{\br_\mi, y}$ acting on $E_B$ such that 
\begin{align}
	\Ex_I \, \Ex_{\bR_\mi |  W_C} \, \, \Ex_{Y} \,\,  \bigbra{\wt{\Phi}_\sspy } V_{\br_\mi, y}  \bigket{\wt{\Phi}_\sspp}  
	&= \Ex_I \, \Ex_{\bR_\mi |  W_C} \, \, \Ex_{Y}  \,\, \F \big (\xi_\sspy^{E_A}, \xi_\sspp^{E_A} \big)  \notag \\
	& \geq 1 - \frac{1}{2} \Ex_I \, \Ex_{\bR_\mi |  W_C} \, \, \Ex_{Y}  \,\,  \big \| \xi^{E_A}_\sspy - \xi^{E_A}_\sspp \big \|_1 \notag \\
	&\geq 1 - \frac{1}{2} \sqrt {\Ex_I \, \Ex_{\bR_\mi |  W_C} \, \, \Ex_{Y}  \,\,  \big \| \xi^{E_A}_\sspy - \xi^{E_A}_\sspp \big \|_1^2} \notag \\
	& \geq 1 - O(\delta^{1/4}/\alpha^{5/2})\;,
\end{align}
where the second line follows from the Fuchs-van de Graaf inequality (Eq.~\eqref{eq:fuchs-graaf}) the third uses Jensen's inequality, and the fourth line uses Claim~\ref{claim:xi-change-y} by conditioning on $X = \dummy$, which occurs with probability $\alpha$. Translating from inner products to Euclidean distance and applying Jensen's inequality we get
\begin{align*}
	\Ex_I \, \Ex_{\bR_\mi |  W_C} \, \, \Ex_{Y} \,\,   \big \| \ket{\wt{\Phi}_\sspy } - V_{\br_\mi, y}  \ket{\wt{\Phi}_\sspp}  \big \|& \leq \sqrt{\Ex_I \, \Ex_{\bR_\mi |  W_C} \, \, \Ex_{Y} \,\,   \big \| \ket{\wt{\Phi}_\sspy } - V_{\br_\mi, y}  \ket{\wt{\Phi}_\sspp}  \big \|^2}\\ &= O\big(\delta^{1/8}/\alpha^{5/4}\big)~.
\end{align*}
Applying Markov's inequality over the index $i$ establishes~\eqref{eq:vy_bound}.

The argument for~\eqref{eq:vxy_bound} proceeds similarly. We start by using \Cref{clm:xi-purification}, which gives that the states $\ket{\wt{\Phi}_\sspxy}$ and $\ket{\wt{\Phi}_\sspxp}$ purify the reduced density matrices $\xi^{E_A}_\sspxy$ and $\xi^{E_A}_\sspxp$ respectively. Using Uhlmann's Theorem and Claim~\ref{claim:xi-change-y} in a similar way to how we derived~\eqref{eq:vy_bound} we deduce the existence of unitaries $V_{\ssxy}$ satisfying~\eqref{eq:vxy_bound}.

To prove~\eqref{eq:ux_bound} we use the following claim whose proof is analogous to that of \Cref{clm:xi-purification}.

\begin{claim}
\label{clm:lambda-purification}
The state $\ket{\wt{\Phi}_\ssxp}^{E_A E_B}$ defined in~\eqref{eq:def-psitt} is a purification of the state $\lambda_\ssxp^{E_B}$. 
\end{claim}

Using the claim we can use Uhlmann's Theorem again to deduce the existence of unitaries $U_{\br_\mi,x}$ that satisfy~\eqref{eq:ux_bound}. 
\end{proof}

\subsection{Second lemma}
\label{sec:l2}

 Recall that according to Proposition~\ref{prop:gamma} the normalization factors  $\gamma_\xy$ and $\gamma_{\pxy}$, when squared, correspond to the probabilities of obtaining outcomes $(\ba_C,\bb_C)$ conditioned on the dependency-breaking variable $\bomega_\mi$ and setting of the inputs $\bx_i,\by_i$ determined by the subscript of $\gamma$.

\begin{lemma}\label{lem:norms-are-close}
With probability at least $1 - O(\delta^{1/4})$ over the choice of $i \in [n] \setminus C$, 
	\begin{equation}
	\label{eq:norms-are-close-s0}
		\Ex_{XY} \, \Ex_{\bR_\mi | \dummy, \dummy, W_C} \, \Big | 1 - \frac{\gamma_{\br_\mi, x,y}}{\gamma_{\br_\mi,\dummy,\dummy}} \Big |^2 \leq O(\delta^{1/4}/\alpha^2)\;,
	\end{equation}
	and
	\begin{equation}
	\label{eq:norms-are-close-s1}
		\Ex_{XY} \, \Ex_{\bR_\mi | \dummy, \dummy, W_C} \, \Big | 1 - \frac{\gamma_\sspxy}{\gamma_{\br_\mi,\dummy,\dummy}} \Big |^2 \leq O(\delta^{1/4}/\alpha^3)\;,
	\end{equation}
	where the expectation over the random variable $\bR_\mi$ is conditioned on the events $\bX_i = \dummy$, $\bY_i = \dummy$, and $W_C$. 
\end{lemma}
Before proceeding with the proof we note that the denominator, $\gamma_\sspp$, cannot be zero in the expectation. This is because if $\br_\mi$ is sampled according to $\P_{\bR_\mi | \dummy, \dummy, W_C}$ with positive probability then $\P_{\bA_C \bB_C | \bomega_\mi, W_C}(\ba_C,\bb_C)$ must be nonzero. 

\begin{proof}
We first establish~\eqref{eq:norms-are-close-s0}. For this proof only we introduce the shorthand notation $\br_\mi = (\bomega_\mi,\ba_C,\bb_C) \in W_C$ to signify that the tuple $(\bx_C,\by_C,\ba_C,\bb_C)$ implies the event $W_C$, i.e.\ the tuple corresponds to winning questions and answers in the coordinates indexed by $C$). We start with the following claim.

\begin{claim}
With probability at least $1 - O(\delta^{1/4}/\alpha^2)$ over the choice of $i \in [n] \setminus C$,
\begin{equation}
\label{eq:norms-are-close-0}
\Ex_{XY} \sum_{\br_\mi \in W_C}  \Big | \P_{\bR_\mi | x,y}(\br_\mi) - \P_{\bR_\mi | \dummy, \dummy}(\br_\mi) \Big | = O\Big(\frac{\delta^{1/4}}{\alpha^2}\Big) \P(W_C | \bX_i = \dummy, \bY_i = \dummy)\;.
\end{equation}
\end{claim}

\begin{proof}
First note that
\begin{align}
\nonumber
\Ex_I \sum_{x,y} \P_{XY}(x,y) \left | \P (W_C| \bX_i = x,\bY_i = y)-\P(W_C) \right |
&= \P(W_C) \Ex_I  \left \| \P_{\bX_i \bY_i |W_C} - \P_{\bX_i\bY_i} \right \| \\
\label{eq:27-0}
&= O(\sqrt{\delta}) \cdot \P(W_C)\;,
\end{align}
where the second equality follows from the Item 1 of Lemma~\ref{lem:classical_skew}. Using $\P_{XY}(\bX_i = \dummy, \bY_i = \dummy) \geq \alpha^2$ we get that
\begin{equation}
\label{eq:w_c-doesnt-change}
\Ex_I \left | \P (W_C| \bX_i = \dummy,\bY_i = \dummy)-\P(W_C) \right | \leq O(\sqrt{\delta}/\alpha^2) \cdot \P(W_C)\;.
\end{equation}
Using the triangle inequality with~\eqref{eq:27-0} and~\eqref{eq:w_c-doesnt-change},
\begin{equation}\label{eq:27-1}
\Ex_I \, \Ex_{XY} \Big | \P(W_C| \bX_i = x,\bY_i = y)-\P(W_C | \bX_i = \dummy, \bY_i = \dummy) \Big | = O(\sqrt{\delta}/\alpha^2) \cdot \P(W_C)\;.
\end{equation}
Using the triangle inequality,  
\begin{align}
\Ex_I \, \Ex_{XY} \, \sum_{\br_\mi \in W_C} &\Big | \P(W_C) \cdot \P_{\bR_\mi | x, y, W_C}(\br_\mi) - \P_{\bR_\mi | x,y} (\br_\mi) \Big |\notag \\
 &\leq \Ex_I \, \Ex_{XY} \, \sum_{\br_\mi \in W_C} \Big | \big(\P(W_C|\bX_i = x,\bY_i = y) - \P(W_C)\big) \cdot \P_{\bR_\mi | x, y, W_C}(\br_\mi)  |\notag\\
&\qquad + \Ex_I \, \Ex_{XY} \, \sum_{\br_\mi \in W_C} \Big |  \P_{\bR_\mi \wedge W_C | x,y}(\br_\mi) - \P_{\bR_\mi | x,y}(\br_\mi)  \Big |  \notag \\
&= O(\sqrt{\delta}) \cdot \P(W_C)\;,\label{eq:27-1b}
\end{align}
where the last equality is obtained by using~\eqref{eq:27-0} to bound the first term and observing that the second term is $0$. 
Conditioning on $(X,Y) = (\dummy,\dummy)$ in~\eqref{eq:27-1b} we get
\begin{align}
\Ex_I \, \Ex_{XY} \, \sum_{\br_\mi\in W_C} \Big | \P(W_C) \cdot \P_{\bR_\mi |  \dummy,  \dummy, W_C}(\br_\mi) - \P_{\bR_\mi |\dummy,  \dummy}(\br_\mi)  \Big | = O\Big(\frac{\sqrt{\delta}}{\alpha^2}\Big) \P(W_C)~.\label{eq:27-1d}
\end{align}
Multiplying both sides of Item 4 of \Cref{lem:classical_skew} by $\P(W_C)$, using the second part of \Cref{lem:trivial}, and expanding the definition of $\| \P_{\bR_\mi | \dummy, \dummy, W_C} - \P_{\bR_\mi | x,y,W_C} \|$ we have
\begin{equation}
\label{eq:27-1c}
\Ex_I \, \Ex_{XY} \sum_{\br_\mi \in W_C} \left | \P(W_C) \cdot \P_{\bR_\mi | \dummy, \dummy, W_C}(\br_\mi) - \P(W_C) \cdot \P_{\bR_\mi | x,y, W_C}(\br_\mi) \right | \leq O\Big(\frac{\sqrt{\delta}}{\alpha^2}\Big) \P(W_C)
\end{equation}
Using the triangle inequality we thus obtain
\begin{align*}
&\Ex_I \Ex_{XY} \sum_{\br_\mi \in W_C}  \Big | \P_{\bR_\mi | x,y}(\br_\mi) - \P_{\bR_\mi | \dummy, \dummy}(\br_\mi) \Big | \\
&\leq \Ex_I \Ex_{XY} \sum_{\br_\mi \in W_C}  \Big | \P_{\bR_\mi | x,y}(\br_\mi) - \P(W_C) \cdot \P_{\bR_\mi | x,y, W_C}(\br_\mi) \Big | \\
& \qquad \qquad \qquad +  \Big | \P(W_C) \cdot \P_{\bR_\mi | x,y, W_C}(\br_\mi) - \P(W_C) \cdot \P_{\bR_\mi | \dummy,\dummy, W_C}(\br_\mi) \Big | \\
& \qquad \qquad \qquad \qquad + \Big | \P(W_C) \cdot \P_{\bR_\mi | \dummy,\dummy, W_C}(\br_\mi) - \P_{\bR_\mi | \dummy,\dummy}(\br_\mi) \Big |  \\
&= O\Big(\frac{\sqrt{\delta}}{\alpha^2}\Big) \P(W_C)
\end{align*}
where we bound the first term of the second line using~\eqref{eq:27-1b}, the second term of the second line using~\eqref{eq:27-1c}, and the third term using~\eqref{eq:27-1d}~.
Using Markov's inequality, with probability at least $1 - \kappa$ over $i \in [n] \setminus C$, we have
\[
\Ex_{XY} \sum_{\br_\mi \in W_C}  \Big | \P_{\bR_\mi | x,y}(\br_\mi) - \P_{\bR_\mi | \dummy, \dummy}(\br_\mi) \Big | = O\Big(\frac{\sqrt{\delta}}{\kappa \cdot \alpha^2}\Big) \P(W_C)\;.
\]
All that remains is to show that with high probability over the index $i$, $\P(W_C)$ is not too far from $\P(W_C | \bX_i = \dummy, \bY_i = \dummy)$. From~\eqref{eq:w_c-doesnt-change} and Markov's inequality we have that with probability at least $1 - \kappa$ over $i \in [n] \setminus C$, we have 
\[
	\Big (1 - O\Big(\frac{\sqrt{\delta}}{\kappa \cdot \alpha^2}\Big) \Big)  \cdot \P(W_C) \leq \P (W_C| \bX_i = \dummy,\bY_i = \dummy) \leq \Big (1 + O\Big(\frac{\sqrt{\delta}}{\kappa \cdot \alpha^2}\Big) \Big) \cdot \P(W_C)~.
\]
Setting $\kappa = \delta^{1/4}/2$, we obtain~\eqref{eq:norms-are-close-0}
\end{proof}

We proceed with the proof of Lemma~\ref{lem:norms-are-close}. 
Observe that we can write
	\begin{align*}
		\P_{\bR_\mi | \dummy, \dummy, W_C}(\br_\mi) &= \P(W_C |  \bX_i = \dummy, \bY_i = \dummy)^{-1} \cdot \P_{\bOmega_\mi \bA_C \bB_C \wedge W_C | \dummy, \dummy}(\bomega_\mi,\ba_C,\bb_C) \\
		&= \P(W_C | \bX_i = \dummy, \bY_i = \dummy)^{-1} \cdot \P_{\bOmega_\mi \bA_C \bB_C | \dummy, \dummy}(\bomega_\mi,\ba_C,\bb_C) \\
		&= \P(W_C | \bX_i = \dummy, \bY_i = \dummy)^{-1} \cdot \P_{\bOmega_\mi | \dummy, \dummy}(\bomega_\mi) \cdot \P_{\bA_C \bB_C | \bomega_\mi,\dummy, \dummy}(\ba_C,\bb_C) \\
		&= \P(W_C |\bX_i = \dummy, \bY_i = \dummy)^{-1} \cdot \P_{\bOmega_\mi}(\bomega_\mi) \cdot \gamma_{\br_\mi,\dummy,\dummy}^2 \;,
	\end{align*}
	where the second line follows from the fact that, given $\br_\mi \in W_C$, the tuple $(\bx_C,\by_C,\ba_C,\bb_C)$ automatically implies the event $W_C$ and the last line follows from the fact that $\P_{\bOmega_\mi | \bX_i,\bY_i} = \P_{\bOmega_\mi}$, because coordinates of $\bOmega$ are independent (Item 1 in Claim~\ref{claim:ufacts}).
	
	Fix $i \in [n] \setminus C$. Using that $|a - b|^2 \leq |a^2 - b^2|$ for all $a,b \geq 0$, 
	\begin{align}
		&\Ex_{XY} \, \Ex_{\bR_\mi | \dummy, \dummy, W_C} \, \Big | 1 - \frac{\gamma_{\br_\mi, x,y}}{\gamma_{\br_\mi,\dummy,\dummy}} \Big |^2 \notag \\
		&\leq \Ex_{XY} \, \Ex_{\bR_\mi | \dummy, \dummy, W_C} \, \Big | 1 - \frac{\gamma_{\br_\mi, x,y}^2}{\gamma_{\br_\mi,\dummy,\dummy}^2} \Big | \label{eq:nac-1} \\
		&= \P(W_C |\bX_i = \dummy, \bY_i = \dummy)^{-1} \,\Ex_{XY} \,\sum_{\br_\mi \in W_C} \P_{\bOmega_\mi}(\bomega_\mi) \cdot \gamma_{\br_\mi,\dummy,\dummy}^2 \cdot \, \Big | 1 - \frac{\gamma_{\br_\mi, x,y}^2}{\gamma_{\br_\mi,\dummy,\dummy}^2} \Big | \notag \\
		&= \P(W_C |\bX_i = \dummy, \bY_i = \dummy)^{-1} \, \Ex_{XY} \,\sum_{\br_\mi \in W_C} \P_{\bOmega_\mi}(\bomega_\mi) \cdot \, \Big | \gamma_{\br_\mi,\dummy,\dummy}^2 - \gamma_{\br_\mi, x,y}^2 \Big | \notag \\
		&= \P(W_C |\bX_i = \dummy, \bY_i = \dummy)^{-1} \, \Ex_{XY} \,\sum_{\br_\mi \in W_C} \, \Big | \P_{\bR_\mi | \dummy,\dummy}(\br_\mi) - \P_{\bR_\mi | x,y}(\br_\mi) \Big |~. \label{eq:norms-are-close-0a}
	\end{align}
	In the second-to-last line we used the fact that $\P_{\bOmega_\mi}(\bomega_\mi) = \P_{\bOmega_\mi | \bX_i = x, \bY_i = y}(\bomega_\mi)$ for all $x,y$, which again is because coordinates of $\Omega$ are independent.  

Combining~\eqref{eq:norms-are-close-0} with~\eqref{eq:norms-are-close-0a} shows~\eqref{eq:norms-are-close-s0}. To establish~\eqref{eq:norms-are-close-s1} we notice that for all $i, \br_\mi$, we have
\begin{align*}
	\gamma_\sspxy^2 &= \bra{\psi} A_{\bomega_\mi,\dummy/x}(\ba_C) \otimes B_{\bomega_\mi,y}(\bb_C) \ket{\psi} \\
	&= \eta \, \bra{\psi} A_{\bomega_\mi,\dummy}(\ba_C) \otimes B_{\bomega_\mi,y}(\bb_C) \ket{\psi} + (1 - \eta) \, \bra{\psi} A_{\bomega_\mi,x}(\ba_C) \otimes B_{\bomega_\mi,y}(\bb_C) \ket{\psi} \\
	&= \eta \, \gamma_\sspy^2 + (1 - \eta) \,\gamma_\ssxy^2\;,
\end{align*}
where the second line uses the definition~\eqref{eq:states-operators-3}. 
Therefore
\begin{align*}
\Ex_{XY} \, \Ex_{\bR_\mi | \dummy, \dummy, W_C} \, \Big | 1 - \frac{\gamma_{\br_\mi, \dummy/x,y}}{\gamma_{\br_\mi,\dummy,\dummy}} \Big |^2 &\leq \Ex_{XY} \, \Ex_{\bR_\mi | \dummy, \dummy, W_C} \, \Big | 1 - \frac{\gamma_{\br_\mi, \dummy/x,y}^2}{\gamma_{\br_\mi,\dummy,\dummy}^2} \Big | \\
&\leq \Ex_{XY} \, \Ex_{\bR_\mi | \dummy, \dummy, W_C} \, \eta \,\Big | 1 - \frac{\gamma_{\br_\mi, \dummy ,y}^2}{\gamma_{\br_\mi,\dummy,\dummy}^2} \Big |+(1 - \eta)\,\Big | 1 - \frac{\gamma_{\br_\mi, x,y}^2}{\gamma_{\br_\mi,\dummy,\dummy}^2} \Big |\;,
\end{align*}
where the second line is by the triangle inequality. 
Using $\P_X(X = \dummy) \geq \alpha$ the two terms in this last expression can be bounded by $O(\delta^{1/4}/\alpha^3)$ using the proof of~\eqref{eq:norms-are-close-s0}, which in fact bounds the stronger quantity~\eqref{eq:nac-1}. This concludes the proof of \Cref{lem:norms-are-close}. 

\end{proof}

\subsection{Proof of Proposition~\ref{prop:local_unitaries}}

We end the section with the proof of Proposition~\ref{prop:local_unitaries}. 

\begin{proof}[Proof of Proposition~\ref{prop:local_unitaries}]
For every $i,\br_\mi$, $x$ and $y$ let unitaries $U_{\br_\mi, x}$, $V_{\br_\mi, y}$ and $V_{\ssxy}$ be as in Lemma~\ref{lem:unitary_bounds}. For notational convenience we suppress the dependence on $(i, \br_\mi)$; thus the unitaries $U_x, V_y, V_{x,y}$, the states $\ket{\Phi_{x,y}}$, and their normalizations $\gamma_{x,y}$ all implicitly depend on $i$ and $\br_\mi$. 

We call an index $i \in [n] \setminus C$ \emph{good} if it satisfies (i) the conclusions of \Cref{lem:unitary_bounds}, (ii) the conclusions of \Cref{lem:norms-are-close}, and (iii) it holds that
\begin{equation}\label{eq:good-i-cond}
\left \| \P_{\bR_\mi | \bX_i = \dummy, \bY_i = \dummy,W_C} - \P_{\bR_\mi | W_C} \right \| \,\leq\, O(\delta^{1/4}/\alpha^2)\;.
\end{equation}
Applying the data processing inequality (\Cref{lem:data-processing}) to Item 3 of \Cref{lem:classical_skew} to marginalize over the random variable $\bOmega_i$, and then applying Markov's inequality over the index $i$, we get that~\eqref{eq:good-i-cond} holds with probability at least $1 - \delta^{1/4}$ over a uniformly random choice of $i$. This combined with \Cref{lem:unitary_bounds} and \Cref{lem:norms-are-close} implies that an index $i$ is good with probability at least $1 - O(\delta^{1/16})$. 
Using the definition we have that 
\begin{align}
\Ex_{\bR_\mi | \dummy, \dummy, W_C} \, \Ex_{XY} \,  \big \|\ket{\wt{\Phi}_{x,y}} - \gamma_{\dummy,\dummy}^{-1}\ket{\Phi_{x,y}} \big \| 
&= \Ex_{\bR_\mi | \dummy, \dummy, W_C} \,\Ex_{XY} \, \Big|1 - \frac{\gamma_{x,y}}{\gamma_{\dummy,\dummy}} \Big| \notag\\
& = O(\delta^{1/8}/\alpha)~,\label{eq:local_unitaries-0}
\end{align}
where the second line is by Jensen's inequality and~\eqref{eq:norms-are-close-s0} in \Cref{lem:norms-are-close}. Similarly,
\begin{align}
\Ex_{\bR_\mi | \dummy, \dummy, W_C} \, \Ex_{XY} \,  \big \|\ket{\wt{\Phi}_{\pxy}} - \gamma_{\dummy,\dummy}^{-1}\ket{\Phi_{\pxy}} \big \| &= \Ex_{\bR_\mi | \dummy, \dummy, W_C} \,\Ex_{XY} \, \Big|1 - \frac{\gamma_{\pxy}}{\gamma_{\dummy,\dummy}} \Big|\notag\\
& = O\big(\delta^{1/8}/\alpha^{3/2}\big)~,\label{eq:local_unitaries-0b}
\end{align}
by~\eqref{eq:norms-are-close-s1}. We note that in the above division by $\gamma_\pp$ is well-defined because $\br_\mi$ is sampled with positive probability from the distribution $\P_{\bR_\mi | \dummy,\dummy,W_C}$. Using~\eqref{eq:good-i-cond} we get from the bounds in \Cref{lem:unitary_bounds} that 
\begin{align}
	\Ex_{\bR_\mi | \dummy,\dummy,W_C} \,\, \Ex_X \,\, \,\,\, \big \| \ket{\wt{\Phi}_\xp } - (U_{x} \otimes \Id)  \ket{\wt{\Phi}_\pp}  \big \| &=  O(\delta^{1/16}/\alpha^2)\;,\label{eq:ux_bound-2}\\
	\Ex_{\bR_\mi | \dummy,\dummy,W_C} \,\, \Ex_Y \,\, \,\,\, \big \| (\Id \otimes V_{y})  \ket{\wt{\Phi}_\pp}  - \ket{\wt{\Phi}_\py }  \big \| &=  O(\delta^{1/16}/\alpha^2)\;,\label{eq:vy_bound-2}\\
	\Ex_{\bR_\mi | \dummy,\dummy,W_C} \,\, \Ex_{XY} \,\,  \big \|  (\Id \otimes V_{\xy})  \ket{\wt{\Phi}_\pxy}  - \ket{\wt{\Phi}_\pxp }  \big \| &=  O(\delta^{1/16}/\alpha^2) \label{eq:vxy_bound-2}
		\end{align}
	where we bound $O(\delta^{1/16}/\alpha^{5/4}) + O(\delta^{1/14}/\alpha^2) = O(\delta^{1/16}/\alpha^2)$. 

The main step in the proof is provided by the following claim. 

\begin{claim}\label{claim:good-i-2}
It holds that
\begin{align}
 \Ex_{\bR_\mi | \dummy, \dummy, W_C} \, \Ex_{XY} \, \big \| (U_{x} \otimes V_{y}) \ket{\wt{\Phi}_{\pp}} - \ket{\wt{\Phi}_{\xy}} \big\|  
 &\leq \Ex_{\bR_\mi | \dummy, \dummy, W_C} \, \Ex_{XY} \, \gamma_\pp^{-1} \, \left \| V_y \ket{{\Phi}_{\dummy,\dummy}} - \ket{{\Phi}_{\dummy,y}} \right \|\label{eq:good-i-2a}\\
 &\qquad + 2\eta^{-1/2} \gamma_\pp^{-1} \,\big \|  V_{x,y} \ket{\Phi_{\dummyx,y}} - \ket{{\Phi}_{\dummyx,\dummy}} \big \| \label{eq:good-i-2b}\\
&  \qquad + \gamma_\pp^{-1} \, \left \|  U_x \ket{{\Phi}_{\dummy,\dummy}} -  \ket{{\Phi}_{x,\dummy}} \right \| + O(\delta^{1/8}/\alpha)\;.\label{eq:good-i-2c}
\end{align}
\end{claim}

\begin{proof}
We start by writing 
\begin{align}
\Ex_{\bR_\mi | \dummy, \dummy, W_C} \, \Ex_{XY} \,& \big \| (U_{x} \otimes V_{y}) \ket{\wt{\Phi}_{\pp}} - \ket{\wt{\Phi}_{\xy}} \big\| \notag \\
&\leq  \Ex_{\bR_\mi | \dummy, \dummy, W_C} \, \Ex_{XY} \, \left \| (U_{x} \otimes V_{y}) \bigket{\wt{\Phi}_{\pp}} - \frac{\gamma_\xy}{\gamma_\pp} \bigket{\wt{\Phi}_{\xy}} \right\| +  \left \| \frac{\gamma_\xy}{\gamma_\pp} \bigket{\wt{\Phi}_{\xy}} - \bigket{\wt{\Phi}_{\xy}} \right\| \notag \\
&= \Ex_{\bR_\mi | \dummy, \dummy, W_C} \, \Ex_{XY} \, \gamma_\pp^{-1} \left \| (U_{x} \otimes V_{y}) \bigket{{\Phi}_{\pp}} - \bigket{{\Phi}_{\xy}} \right\| +  \left | \frac{\gamma_\xy}{\gamma_\pp} -1  \right| \notag \\
&\leq \Ex_{\bR_\mi | \dummy, \dummy, W_C} \, \Ex_{XY} \, \gamma_\pp^{-1} \left \| (U_{x} \otimes V_{y}) \bigket{{\Phi}_{\pp}} - \bigket{{\Phi}_{\xy}} \right\| + O(\delta^{1/8}/\alpha) \;,\label{eq:local_unitaries-1}
\end{align}
where the last line follows from~\eqref{eq:local_unitaries-0}.

For $s \in \{\dummy, \dummyx, x\}$ and $y \in \Y$ recall that the unnormalized state $\ket{\Phi_{s,y}}$ is defined in~\eqref{eq:def-state-y} as $A_s^{1/2} \otimes B_y^{1/2} \ket{\psi}$. Using~\eqref{eq:states-operators-3a} (resp.~\eqref{eq:states-operators-3b}) it follows that $A_x A_{\dummyx}^{-1/2} A_\dummyx^{1/2} = A_x$ (resp. $A_\dummy A_{\dummyx}^{-1/2} A_\dummyx^{1/2} = A_\dummy$), because the image of $A_x$ (resp. $A_\dummy$) is contained in the image of $A_\dummyx$. Thus
\begin{align}
	&\left \| U_x \ket{\Phi_{\dummy,y}} - \ket{\Phi_{x,y}} \right \| \notag \\
	&= \big \| U_x A_\dummy^{1/2} A_\dummyx^{-1/2} \ket{\Phi_{\dummyx,y}} - A_{x}^{1/2} A_{\dummy\! /x}^{-1/2}  \ket{\Phi_{\dummyx,y}} \big \| \notag \\
	&= \big \| U_x A_\dummy^{1/2} A_\dummyx^{-1/2} \otimes V_{x,y} \ket{\Phi_{\dummyx,y}} - A_{x}^{1/2} A_{\dummy\! /x}^{-1/2} \otimes V_{x,y}  \ket{\Phi_{\dummyx,y}} \big \| \notag \\
&\leq  \big \| \big ( U_x A_\dummy^{1/2} A_\dummyx^{-1/2} \big ) \otimes V_{x,y} \ket{\Phi_{\dummyx,y}} - \big ( U_x A_\dummy^{1/2} A_\dummyx^{-1/2} \big ) \ket{\Phi_{\dummyx, \dummy}} \big \|  \label{eq:22-2a}\\
& \qquad +  \big \| \big ( U_x A_\dummy^{1/2} A_\dummyx^{-1/2} \big ) \ket{\Phi_{\dummyx,\dummy}} - A_x^{1/2} A_\dummyx^{-1/2}  \ket{\Phi_{\dummyx,\dummy}} \big \| \label{eq:22-2b}\\
& \qquad + \big \| A_x^{1/2} A_\dummyx^{-1/2} \ket{\Phi_{\dummyx,\dummy}} - A_x^{1/2} A_\dummyx^{-1/2}  \otimes V_{x,y} \ket{\Phi_{\dummyx,y}} \big \|~,\label{eq:22-2c}
\end{align}
by the triangle inequality. We bound each of these three terms as follows. 
Using $\|A_\dummy^{1/2} A_\dummyx^{-1/2}\|\leq \eta^{-1/2}$, which follows from~\eqref{eq:states-operators-3a} and operator motonicity of the square root, the term~\eqref{eq:22-2a} can be bounded as
$$
\left \| \left ( U_x A_\dummy^{1/2} A_\dummyx^{-1/2} \right ) \otimes V_{x,y}\, \ket{\Phi_{\dummyx,y}} - \left ( U_x A_\dummy^{1/2} A_\dummyx^{-1/2} \right ) \ket{\Phi_{\dummyx,\dummy}} \right \|  \leq \eta^{-1/2} \, \left \|  V_{x,y} \ket{\Phi_{\dummyx,y}} - \ket{\Phi_{\dummyx,\dummy}} \right \|.
$$
The term~\eqref{eq:22-2b} can be re-written as
$$
\left \| \left ( U_x A_\dummy^{1/2} A_\dummyx^{-1/2} \right ) \ket{\Phi_{\dummyx,\dummy}} - A_x^{1/2} A_\dummyx^{-1/2}  \ket{\Phi_{\dummyx,\dummy}} \right \| = \left \| U_x \ket{\Phi_{\dummy,\dummy}} - \ket{\Phi_{x,\dummy}} \right \|.
$$
Finally, using $\|A_x^{1/2} A_\dummyx^{-1/2}\|\leq (1 - \eta)^{-1/2}$ from~\eqref{eq:states-operators-3b} and that $(1 - \eta)^{-1/2} \leq \eta^{-1/2}$ because $\eta \leq 1/2$, the term~\eqref{eq:22-2c} can be bounded as
$$
\left \| A_x^{1/2} A_\dummyx^{-1/2} \ket{\Phi_{\dummyx,\dummy}} - A_x^{1/2} A_\dummyx^{-1/2}  \otimes V_{x,y}\, \ket{\Phi_{\dummyx,y}} \right \| \leq \eta^{-1/2} \left \|  \ket{\Phi_{\dummyx,\dummy}} - V_{x,y}\, \ket{\Phi_{\dummyx,y}} \right \|.
$$
Putting the three bounds together, from~~\eqref{eq:22-2a}--\eqref{eq:22-2c} we get
\begin{align}
\left \| U_x \ket{\Phi_{\dummy,y}} - \ket{\Phi_{x,y}} \right \| \leq   2 \eta^{-1/2} \big \|  V_{x,y} \ket{\Phi}_{\dummyx,y} - \ket{{\Phi}_{\dummyx,\dummy}} \big \| + \left \|  U_x \ket{{\Phi}_{\dummy,\dummy}} -  \ket{{\Phi}_{x,\dummy}} \right \| .\label{eq:22-3}
\end{align}
Using the triangle inequality and that $U_x$ is unitary,
\begin{align*}
&\left \| (U_x \otimes V_y) \ket{{\Phi}_{\dummy,\dummy}} - \ket{{\Phi}_{x,y}} \right\| \\
&\leq  \left \| V_y \ket{{\Phi}_{\dummy,\dummy}} - \ket{{\Phi}_{\dummy,y}} \right \| + \left \| U_x \ket{{\Phi}_{\dummy,y}} - \ket{{\Phi}_{x,y}} \right \| \notag\\ 
&\leq  \left \| V_y \ket{{\Phi}_{\dummy,\dummy}} - \ket{{\Phi}_{\dummy,y}} \right \|  + 2\eta^{-1/2} \left \|  V_{x,y} \ket{\Phi_{\dummyx,y}} - \ket{{\Phi}_{\dummyx,\dummy}} \right \| +  \left \|  U_x \ket{{\Phi}_{\dummy,\dummy}} -  \ket{{\Phi}_{x,\dummy}} \right \|\;,
\end{align*}
where the last inequality is due to~\eqref{eq:22-3}. Inserting into~\eqref{eq:local_unitaries-1} proves the claim. 
\end{proof}

To conclude the proof \Cref{prop:local_unitaries} it remains to bound each of the three terms on the right-hand side of Claim~\ref{claim:good-i-2} by $O(\delta^{1/16}/\alpha^3)$, and then use~\eqref{eq:good-i-cond} to exchange the expectation $ \Ex_{\bR_\mi | \dummy, \dummy, W_C}$ with $ \Ex_{\bR_\mi | W_C}$ by introducing an additive $O(\delta^{1/4}/\alpha^2)$ error. 
We start with bounding~\eqref{eq:good-i-2a}:
\begin{align*}
\Ex_{\bR_\mi | \dummy, \dummy, W_C} \, \Ex_Y \, \gamma_\pp^{-1} &\,\, \left \|  V_y \ket{{\Phi}_{\dummy,\dummy}} -  \ket{{\Phi}_\py} \right \|  \\
&= \Ex_{\bR_\mi | \dummy, \dummy, W_C} \, \Ex_Y \,\, \Big \|  V_y \ket{\wt{\Phi}_{\dummy,\dummy}} -  \frac{\gamma_\py}{\gamma_\pp} \ket{\wt{\Phi}_\py} \Big \| \\
	&\leq  \Ex_{\bR_\mi | \dummy, \dummy, W_C} \, \Ex_Y \, \big \|  V_y \ket{\wt{\Phi}_{\dummy,\dummy}} - \ket{\wt{\Phi}_{\py}} \big \| + \Big \| \ket{\wt{\Phi}_{\py}} - \frac{\gamma_{\py}}{\gamma_\pp} \ket{\wt{\Phi}_{\py}}\Big \|\\
&= O(\delta^{1/16}/\alpha^2) + \Ex_{\bR_\mi | \dummy, \dummy, W_C} \, \Ex_Y \, \Big | 1 - \frac{\gamma_{\py}}{\gamma_\pp} \Big | \\
&= O(\delta^{1/16}/\alpha^2) + O(\delta^{1/8}/\alpha^2) = O(\delta^{1/16}/\alpha^2)~,
\end{align*}
	where the third line uses~\eqref{eq:ux_bound-2} to bound the first term and the last line follows from~\eqref{eq:local_unitaries-0} and conditioning on $X = \dummy$, which occurs with probability $\alpha$. We bound~\eqref{eq:good-i-2c} in an analogous fashion. 
	Finally, we bound~\eqref{eq:good-i-2b} as follows:
\begin{align*}
&2\eta^{-1/2}  \, \Ex_{\bR_\mi | \dummy, \dummy, W_C} \, \Ex_{XY} \, \, \gamma_\pp^{-1} \,\, \left \|  V_{x,y} \ket{{\Phi}_{\pxy}} -  \ket{{\Phi}_\pxp} \right \| \\
&= 2\eta^{-1/2} \,  \Ex_{\bR_\mi | \dummy, \dummy, W_C} \, \Ex_{XY} \,\, \left \|  \frac{\gamma_\pxy}{\gamma_\pp} V_{x,y} \ket{\wt{\Phi}_{\pxy}} -  \frac{\gamma_\pxp}{\gamma_\pp} \ket{\wt{\Phi}_\pxp} \right \| \\
	&\leq  2\eta^{-1/2} \,\Ex_{\bR_\mi | \dummy, \dummy, W_C} \, \Ex_{XY} \, \Big \| \frac{\gamma_\pxy}{\gamma_\pp} \ket{\wt{\Phi}_{\pxy}} - \ket{\wt{\Phi}_{\pxy}} \Big \| + \Big \|  V_{x,y} \ket{\wt{\Phi}_{\pxy}} - \ket{\wt{\Phi}_{\pxp}} \Big \|  \\
	& \qquad \qquad \qquad \qquad \qquad \qquad \qquad \qquad + \left \| \ket{\wt{\Phi}_{\pxp}} - \frac{\gamma_{\pxp}}{\gamma_\pp} \ket{\wt{\Phi}_{\pxp}}\right \|\\
&= 2\eta^{-1/2} \, \Ex_{\bR_\mi | \dummy, \dummy, W_C} \, \Ex_{XY} \, \Big | 1 - \frac{\gamma_{\pxy}}{\gamma_\pp} \Big | + O(\delta^{1/16}/(\eta^{1/2} \cdot \alpha^2)) + 2\eta^{-1/2} \, \Ex_{\bR_\mi | \dummy, \dummy, W_C} \, \Ex_{XY} \, \Big | 1 - \frac{\gamma_{\pxp}}{\gamma_\pp} \Big | \\
&= O(\delta^{1/8}/\alpha^2) + O(\delta^{1/16}/\alpha^2) + O(\delta^{1/8}/\alpha^3) = O(\delta^{1/16}/\alpha^3)~.
\end{align*}
The last line follows from $\eta = \alpha/2$,~\eqref{eq:local_unitaries-0b} to bound the first term,~\eqref{eq:vxy_bound-2} to bound the second term, and ~\eqref{eq:local_unitaries-0b} along with conditioning on $Y = \dummy$ to bound the last term.

\end{proof}

\section{Parallel repetition of anchored games}
\label{sec:analysis}

In this section we show our main theorem, which we state as follow.

\begin{theorem}[Main Theorem]\label{thm:anchorpr_quantum}
	There exists a universal constant $0 < c < \frac{1}{16 \, \log (e)}$ such that the following holds. Let $0<\alpha\leq 1$ and let $G$ be an $\alpha$-anchored game. Then for all $0 < \eps \leq 1$,
for all integers $n$, and for all $p$ satisfying
\begin{equation}
\label{eq:p}
p \geq \frac{4}{\eps} \, \exp \Big ( - \frac{c \cdot \alpha^{48} \cdot \eps^{17} \cdot n}{s} \Big)\;,
\end{equation}
where  $s = \max \{ \log |\A \times \B|, 1 \}$, 
\[
	\E(G^n,p) \geq \E(G,1 - \eps)\;.
\]
\end{theorem}

We show that Theorem~\ref{thm:main-informal} stated in the introduction follows as an easy corollary. 

\begin{proof}[Proof of Theorem~\ref{thm:main-informal}]
	Suppose that $\eval(G^n)$ is larger than 
	\[p = \frac{4}{\eps} \, \exp \Big ( - \frac{c \cdot \alpha^{48} \cdot \eps^{17} \cdot n}{s} \Big)\;.\]
	Then $\E(G^n,p) < \infty$, which by Theorem~\ref{thm:anchorpr_quantum} and the choice of $p$ implies that $\E(G,1 - \eps) < \infty$. This contradicts the assumption that $\eval(G) < 1 - \eps$.
\end{proof}

\subsection{Main lemma}

As in Section~\ref{sec:quantum_setup} and Section~\ref{sec:unitaries} fix an $\alpha$-anchored two-player game $G = (\X \times \Y,\A \times \B,\mu,V)$ and a strategy $\strategy^n = (\ket{\psi},A,B)$ for $G^n$. 
For every choice of subset $C \subseteq [n]$, index  $i \in [n] \setminus C$, and $\br_\mi = (\bomega_\mi,\ba_C,\bb_C)$ we define a  strategy $\strategy_{\br_\mi}$ for $G$ as follows. The shared state is 
\[\ket{\wt{\Phi}_\sspp} \,\in\, \C_{E_A}^d \otimes \C_{E_B}^d\;,\]
as defined in~\eqref{eq:def-psitt}. The measurement operators used by the players are
\[ \wt{A}_x(a)\,=\,  U_{\br_\mi,x}^\dagger \what{A}_{\br_\mi,x}(\ba_i)U_{\br_\mi,x}\qquad\text{and}\qquad \wt{B}_y(b) \,=\,V_{\br_\mi,y}^\dagger\what{B}_{\br_\mi, y}(\bb_i) V_{\br_\mi,y}\;,\]
where $x,y$ denote questions to the first and second player respectively and the answers $a\in \A$, $b\in \B$ are identified with $\ba_i$ and $\bb_i$ on the right-hand side (as will generally be the case in this and the following sections). 
Here, $U_{\br_\mi,x}$, $V_{\br_\mi,y}$ are the unitaries from Proposition~\ref{prop:local_unitaries}
 and $\{\what{A}_{\br_\mi,x}(\ba_i) \}$ and $\{\what{B}_{\br_\mi, y}(\bb_i)\}$ POVMs defined as
\begin{align*}
	\what{A}_{\br_\mi,x}(\ba_i) = \sum_{\ba | \ba_i, \ba_C} (A_{\bomega_\mi,x}(\ba_C) )^{-1/2} \cdot A_{\bomega_\mi,x}(\ba) \cdot (A_{\bomega_\mi,x}(\ba_C) )^{-1/2} \;,\\
	\what{B}_{\br_\mi, y}(\bb_i) = \sum_{\bb | \bb_i, \bb_C}  (B_{\bomega_\mi,y}(\bb_C) )^{-1/2} \cdot B_{\bomega_\mi,y}(\bb) \cdot (B_{\bomega_\mi,y}(\bb_C) )^{-1/2}\;,
\end{align*}
where $\ba | \ba_i, \ba_C$ (resp. $\bb | \bb_i, \bb_C$) denotes summing over tuples $\ba$ that are consistent with $\ba_C$ and $\ba_i$ (resp. $\bb$ that are consistent with $\bb_C$ and $\bb_i$). 
Let $\Qsf_{A B | \br_\mi, x,y}$ denote the distribution of answers $(a,b)$ using the strategy $\strategy_{\br_\mi}$ when the players are given question pair $(x,y)$. In other words,
\begin{align} 
\Qsf_{A B | \br_\mi, x,y}(a,b)&= \Tr \Big ( \wt{A}_x(a) \otimes \wt{B}_y(b) \,\, \wt{\Phi}_\sspp \Big) \, \notag \\
&= \Tr \Big ( \what{A}_{\br_\mi,x} (a) \otimes \what{B}_{\br_\mi,y} (b) \,\, (U_{\br_\mi,x} \otimes V_{\br_\mi,y}) \, \wt{\Phi}_\sspp \, (U_{\br_\mi,x} \otimes V_{\br_\mi,y})^\dagger \Big) \; .\label{eq:main-lemma-0}
\end{align}
The following provides the main step in the proof of Theorem~\ref{thm:anchorpr_quantum}.

\begin{lemma}[Main Lemma]
\label{lem:anchorpr_main_lemma}
There exists a universal constant $\beta \geq 1$ such that for all subsets $C \subseteq [n]$, 
\[
	\Ex_I \big \| \P_{\bR_\mi | W_C} \cdot \P_{XY} \cdot \Qsf_{AB | \br_\mi, x,y} - \P_{\bX_i \bY_i \bR_\mi \bA_i \bB_i | W_C} \big \| \,=\, \beta \,\delta^{1/16}/\alpha^3\;,
\]
where $\delta$ is defined in~\eqref{eq:delta} and we identify $(x,y,a,b)$ with $(\bx_i,\by_i,\ba_i,\bb_i)$. 
\end{lemma}

\begin{proof}
We start with two claims, from which the proof of the lemma follows.

\begin{claim}\label{claim:main-1}
For all $\br_\mi, x,y$ and $a,b$,
\begin{equation}\label{eq:anchorpr_main-2}
\Tr \big ( \what{A}_{\br_\mi, x}(a) \otimes \what{B}_{\br_\mi, y}(b) \,\,\,\wt{\Phi}_{\ssxy} \big )  \,=\,\P_{\bA_i \bB_i | \br_\mi,x,y}(a,b)\;.
\end{equation}
\end{claim}

\begin{proof}
From the definitions of $\wt{\Phi}_{\ssxy}$, $\what{A}_{\br_\mi, x}(a)$, and $\what{B}_{\br_\mi, y}(b)$ the left-hand side of~\eqref{eq:anchorpr_main-2} can be expanded as
\begin{align*}
&\Tr \big ( \what{A}_{\br_\mi, x}(a) \otimes \what{B}_{\br_\mi, y}(b) \,\,\,\wt{\Phi}_{\ssxy} \big )	\\
&= \gamma_\ssxy^{-2} \cdot \sum_{\substack{\ba | a, \ba_C \\ \bb | b, \bb_C}}  \Tr \left ( A_{\bomega_\mi,x}(\ba) \otimes B_{\bomega_\mi,y}(\bb) \,\, \ketbra{\psi}{\psi} \right ) \notag \\
	&= \P_{\bA_C \bB_C|\bomega_\mi, x, y}(\ba_C,\bb_C)^{-1} \cdot \sum_{\substack{\ba | a, \ba_C \\ \bb | b, \bb_C}}  \Tr \left ( A_{\bomega_\mi,x}(\ba) \otimes B_{\bomega_\mi,y}(\bb) \,\, \ketbra{\psi}{\psi} \right ) \notag \\
	&= 	\P_{\bA_C \bB_C|\bomega_\mi, x,y}(\ba_C,\bb_C)^{-1} \cdot \Ex_{\substack{\bX| \bomega_\mi, x \\ \bY| \bomega_\mi ,y}} \sum_{\substack{\ba | a, \ba_C \\ \bb | b, \bb_C}}  \Tr \left ( A_{\bx}(\ba) \otimes B_{\by}(\bb) \, \ketbra{\psi}{\psi} \right )\notag \\
	&= 	\P_{\bA_C \bB_C|\bomega_\mi, x, y}(\ba_C,\bb_C)^{-1} \cdot \Ex_{\bX \bY | \bomega_\mi ,x, y} \sum_{\substack{\ba | a, \ba_C \\ \bb | b, \bb_C}}  \Tr \left ( A_{\bx}(\ba) \otimes B_{\by}(\bb) \, \ketbra{\psi}{\psi} \right ) \notag \\	
	&= \P_{\bA_C \bB_C|\bomega_\mi ,x,y}(\ba_C,\bb_C)^{-1} \cdot \P_{\bA_C\bB_C\bA_i \bB_i | \bomega_\mi,x,y}(\ba_C,\bb_C,a,b) \notag \\
	&= \P_{\bA_i \bB_i | \br_\mi,x,y}(a,b)\;,\notag
\end{align*}
where the third line is by~\eqref{eq:gamma-def-1}, the fourth uses the definition of $A_{\bomega_\mi,x}(\ba)$, $B_{\bomega_\mi,y}(\bb)$ in~\eqref{eq:states-operators-2}, and the fifth uses Item 1 of Claim~\ref{claim:ufacts}, which implies that $\bX, \bY$ are independent conditioned on $\bOmega_\mi = \bomega_\mi$, $\bX_i = x$ and $\bY_i = y$.
\end{proof}

\begin{claim}\label{claim:main-2}
The following holds:
\begin{equation}\label{eq:main-2-0}
\Ex_I \big\| \P_{\bR_\mi |W_C} \cdot \P_{\bX_i \bY_i} \cdot \Qsf_{\bA_i \bB_i |\bR_\mi \bX_i \bY_i } 
	-  \P_{\bR_\mi |W_C} \cdot \P_{\bX_i \bY_i} \cdot \P_{\bA_i \bB_i | \bR_\mi \bX_i \bY_i} \big \|  
\,=\,O(\delta^{1/16}/\alpha^3)\;.
\end{equation}
\end{claim}

\begin{proof}
Fix $\br_\mi, x,y$. We bound the total variation distance
\begin{align}
	\big \| \Qsf_{\bA_i \bB_i | \br_\mi, x,y} - \P_{\bA_i \bB_i | \br_\mi,x,y}\big \|
	&\leq  \big \|  \big (   U_{\br_\mi,x} \otimes V_{\br_\mi,y} \big) \wt{\Phi}_{\sspp} \big (   U_{\br_\mi,x} \otimes V_{\br_\mi,y} \big)^\dagger -  \wt{\Phi}_{\br_\mi,x,y}   \big \|_1 \notag\\
	&\leq \sqrt{2} \,\big\| U_{\br_\mi,x} \otimes V_{\br_\mi,y} \ket{\wt{\Phi}_\sspp} - \ket{\wt{\Phi}_{\ssxy}} \big\|\;.\label{eq:main-2-1}
\end{align}
Here the first line follows by contractivity of the trace distance; indeed~\eqref{eq:main-lemma-0} and \eqref{eq:anchorpr_main-2} imply that the distributions $\P$ and $\Qsf$ on the left-hand side can be obtained by measuring the corresponding state on the right-hand side using the POVM  $\{\what{A}_{\br_\mi, x}(a) \otimes \what{B}_{\br_\mi, y}(b)\}_{a,b}$. The second line follows from the fact that for  pure states $\ket{\psi}$ and $\ket{\phi}$, $ \| \psi - \phi \|_1 \leq \sqrt{2} \|\, \ket{\psi} - \ket{\phi} \|$.
Thus
\begin{align*}
\Ex_I \,\, \big\| \P_{\bR_\mi |W_C} \cdot \P_{\bX_i \bY_i} \cdot \Qsf_{\bA_i \bB_i |\bR_\mi \bX_i \bY_i } 
	&-  \P_{\bR_\mi |W_C} \cdot \P_{\bX_i \bY_i} \cdot \P_{\bA_i \bB_i | \bR_\mi \bX_i \bY_i} \big \|  \\
	&= \Ex_I \, \, \Ex_{\bR_\mi |W_C} \,\, \Ex_{\bX_i \bY_i} \,\, \big \| \Qsf_{\bA_i \bB_i |\br_\mi \bx_i \by_i} - \P_{\bA_i \bB_i | \br_\mi \bx_i \by_i} \big \|    \\
	&\leq \sqrt{2} \, \Ex_I \, \, \Ex_{\bR_\mi |W_C} \,\, \Ex_{XY} \big\| U_{\br_\mi,x} \otimes V_{\br_\mi,y} \ket{\wt{\Phi}_\sspp} - \ket{\wt{\Phi}_{\ssxy}} \big\|\\
&\leq O(\delta^{1/16}/\alpha^3)\;,
\end{align*}
where the first inequality is by~\eqref{eq:main-2-1} and the last inequality follows from Proposition~\ref{prop:local_unitaries}.
\end{proof}

Item 1 of \Cref{lem:classical_skew}, combined with the data processing inequality (\Cref{lem:data-processing}) to marginalize over $\bOmega_i$, implies that
\[
	\Ex_I \| \P_{\bX_i \bY_i} - \P_{\bX_i \bY_i | W_C} \| \leq \sqrt{\delta}~.
\]
This allows us to replace the second occurrence of $\P_{\bX_i \bY_i}$ in Item 4 of \Cref{lem:classical_skew} with $\P_{\bX_i \bY_i | W_C}$, only incurring an additive $\sqrt{\delta}$ error:
\begin{equation}
\label{eq:main-3}
	\Ex_I \big\|  \P_{\bX_i \bY_i}  \P_{\bR_\mi | \bX_i = \dummy, \bY_i = \dummy, W_C} -  \P_{\bR_\mi \bX_i \bY_i | W_C} \big\| \leq O(\sqrt{\delta}/\alpha^2)~.
\end{equation}
Using the data processing inequality (\Cref{lem:data-processing}) to marginalize over $\bX_i \bY_i$ gives
\begin{equation}
\label{eq:main-4}
	\Ex_I \big\|  \P_{\bR_\mi | \bX_i = \dummy, \bY_i = \dummy, W_C} -  \P_{\bR_\mi | W_C} \big\| \leq O(\sqrt{\delta}/\alpha^2)~.
\end{equation}
Equation~\eqref{eq:main-4} allows us to replace $\P_{\bR_\mi | \bX_i = \dummy, \bY_i = \dummy, W_C}$ in~\eqref{eq:main-3} with $ \P_{\bR_\mi | W_C}$ while only incurring an additive $O(\sqrt{\delta}/\alpha^2)$ error, yielding
\[
\Ex_I \left \| \P_{\bR_\mi |W_C} \cdot \P_{\bX_i \bY_i} - \P_{\bR_\mi \bX_i \bY_i | W_C} \right \| \leq O(\sqrt{\delta}/\alpha^2)~.
\] 
Using the preceding inequality we can replace $\P_{\bR_\mi | W_C} \cdot \P_{\bX_i \bY_i}$ in~\eqref{eq:main-2-0} with $\P_{\bR_\mi \bX_i \bY_i | W_C}$ while only incurring an additive $O(\sqrt{\delta}/\alpha^2)$ error, yielding
\[
\Ex_I \left \|  \P_{\bR_\mi |W_C} \cdot \P_{\bX_i \bY_i} \cdot \Qsf_{\bA_i \bB_i |\bR_\mi \bX_i \bY_i } 
	-  \P_{\bR_\mi \bX_i \bY_i \bA_i \bB_i |W_C}  \right \| \,=\, O\big(\delta^{1/16}/\alpha^3\big)\;.
\]
Using the fact that $\P_{\bX_i \bY_i}(x,y) = \P_{XY}(x,y)$, this concludes the proof of \Cref{lem:anchorpr_main_lemma}.
\end{proof}

\subsection{Proof of Theorem~\ref{thm:anchorpr_quantum}} 

Let $\eps > 0$ and $p$ satisfy \Cref{eq:p}. First suppose that $\E(G^n,p) = \infty$; then $\E(G^n,p) \geq \E(G,1 - \eps)$ trivially holds. The other case is that $\E(G^n,p) = d$ for some finite $d$. Let $\strategy^n = (\ket{\psi},A,B)$ be a $d$-dimensional strategy for $G^n$ with value $\eval(G^n,\strategy^n) = p$. Assume without loss of generality that $\ket{\psi}$ is a symmetric state as given in Equation~\eqref{eq:psi-sym}.  Our goal is to show the existence of a $d$-dimensional strategy $\strategy$ for $G$ that succeeds with probability at least $1 - \eps$. 
We start with the following proposition.

\begin{proposition}
\label{prop:subset}
	Let $W$ denote the indicator for winning all $n$ coordinates. Suppose that $n \geq \frac{16}{\eps} \log \frac{4}{\eps \cdot \P(W)}$. Then there exists a set $C \subseteq [n]$ of size at most $t = \frac{8}{\eps} \log \frac{4}{\eps \cdot \P(W)}$ such that
	$$
		\Ex_I\, \P (W_i | W_C) \,\geq \, 1 - \eps/2\;,
	$$
	where $\Ex_I$ denotes the expectation over a uniformly random $i$ chosen from $[n] \setminus C$ and $\P(W_i | W_C)$ denotes the probability, using the strategy $\strategy^n$, of winning the $i$-th instance of $G$ conditioned on winning all instances indexed by $C$.
\end{proposition}

\begin{proof}
	Set $\delta = \eps/8$. Let $W_{> 1 - \delta}$ denote the event that the players win more than $(1 - \delta)n$ instances of $G$ using the strategy $\strategy^n$. To show existence of such a set $C$ we will show that $\Ex_C \Ex_I \P(\neg W_i | W_C) \leq \eps/2$, where $C$ is a (multi)set of $t$ independently chosen indices in $[n]$. This implies that there exists a particular set $C$ such that $\Ex_I \P(\neg W_i | W_C) \leq \eps/2$, which concludes the claim.
	
	First we write, for a fixed $C$ and $i \in [n] \setminus C$,
	\begin{align*}
		\P ( \neg W_i | W_C) = \P(\neg W_i | W_C \wedge W_{> 1 - \delta}) \P(W_{> 1 - \delta} | W_C) + \P(\neg W_i | W_C \wedge \neg W_{> 1 - \delta}) \P(\neg W_{> 1 - \delta} | W_C)\;.
	\end{align*}
	Observe that $\Ex_I \P(\neg W_i | W_C \wedge W_{> 1 - \delta})$ is the probability that, conditioned on winning all coordinates indexed by $C$, a randomly selected coordinate $i \in [n] \setminus C$ happens to designate one of the (at most) $\delta n$ instances that were lost. This is at most $\delta n/(n - t) \leq \eps/4$, where we use our assumption on $t$ from the statement of the proposition. Now observe that 
	\begin{align*}
		\Ex_C \P(\neg W_{> 1 - \delta} | W_C) &\leq \Ex_C \frac{\P(W_C | \neg W_{> 1 - \delta})}{\P(W_C)} \\
		&\leq \frac{1}{\P(W)} (1 - \delta)^t \\
		&\leq \eps/4\;,
	\end{align*}
	where in the second inequality we used the fact that $\P(W_C) \geq \P(W)$. Therefore
	\[
		\Ex_C \Ex_I \P ( \neg W_i | W_C) \leq (\eps/4) \cdot 1 + 1 \cdot (\eps/4) \leq \eps/2~,
	\]
	as desired.
\end{proof}

The lower bound on $\P(W)$ given by Eq.~\eqref{eq:p}, for a setting of the universal constant $c$ given in~\eqref{eq:c} below, satisfies the condition of \Cref{prop:subset}: 
\[
	n \geq \frac{n}{s} \geq \frac{1}{c \, \alpha^{48} \, \eps^{17}} \, \ln \frac{4}{\eps \, \P(W)} \geq \frac{16}{\eps} \log \frac{4}{\eps \cdot \P(W)}\;,
\]
where we used that $0 < \alpha, \eps \leq 1$, $s= \max \{ \log |\A \times \B|, 1 \} \geq 1$, and $0 < c \leq \frac{1}{16 \cdot \log(e)}$. Fix a subset $C \subseteq [n]$ satisfying the conclusions of the proposition. It follows that sampling a uniformly random $i \in [n] \setminus C$ and then sampling from the distribution $\P_{\bX_i \bY_i \bR_\mi \bA_i \bB_i | W_C}$ yields a tuple $(i,\bx_i,\by_i,\br_\mi,\ba_i,\bb_i)$ such that $V(\bx_i,\by_i,\ba_i,\bb_i) = 1$ (i.e.\ $W_i = 1$) with probability at least $1 - \eps/2$. \Cref{lem:anchorpr_main_lemma} and the definition of $\Qsf_{AB}$ implies that if we first sample a uniformly random $i$, sample $\br_\mi$ from the distribution $\P_{\bR_\mi | W_C}$ and then play the game $G$ using the strategy $\strategy_{\br_\mi}$ the resulting distribution over tuples $(\br_\mi,x,y,a,b)$ is $O(\delta^{1/16}/\alpha^3)$-close to $\P_{\bX_i \bY_i \bR_\mi \bA_i \bB_i | W_C}$, on average over the choice of index $i$. As a consequence of the two previous points, sampling $i$, sampling from the distribution $\P_{\bR_\mi | W_C}$, and then playing the game $G$ using strategy $\strategy_{\br_\mi}$ yields winning answers with probability at least $1 - \eps/2 - \beta \delta^{1/16}/\alpha^3$, where $\beta$ is the universal constant from \Cref{lem:anchorpr_main_lemma}. 

Using that by assumption $t \leq \frac{8}{\eps} \log \frac{4}{\eps \cdot \P(W)} \leq n/2$,
\begin{align*}
\delta &= \frac{1}{n-t} \Big( \log \frac{1}{\P(W_C)} + t \cdot s \Big)\\
& \leq \frac{2}{n} \Big ( \frac{16 \cdot s}{\eps} \log \frac{4}{\eps \cdot \P(W)} \Big) \\
&\leq \frac{2}{n} \cdot \frac{16 \cdot s}{\eps} \cdot \frac{c \log (e) \, \alpha^{48} \, \eps^{17} \, n}{s} \\
&= 32 \, c \, \log(e) \, \alpha^{48} \eps^{16}~.
\end{align*}
Setting 
\begin{equation}
\label{eq:c}
	c = \frac{1}{32 \, \log(e) \, (4\beta)^{16}}
\end{equation}
we get that $\beta \delta^{1/16}/\alpha^3 \leq \eps/4$, meaning that the probability that the strategy $\strategy_{\br_\mi}$ wins $G$ is at least $1 - \eps$.
By averaging there must exist a pair $(i,\br_\mi)$ such that
\[
	\eval(G,\strategy_{\br_\mi}) \geq 1 - \eps\;.
\]
Since $\strategy_{\br_\mi}$ is a $d$-dimensional strategy where $d = \E(G^n,p)$, this implies that $\E(G,1 - \eps) \leq d$, which concludes the proof of \Cref{thm:anchorpr_quantum}.

\bibliography{anchor}

\end{document}